\newcommand{\mc}[1]{\mathcal{#1}}
\newcommand{\bb}[1]{\mathbb{#1}}
\def\eqref#1{equation~\ref{#1}}
\def\1{\bm{1}}
\DeclareMathAlphabet{\mathsfit}{\encodingdefault}{\sfdefault}{m}{sl}
\SetMathAlphabet{\mathsfit}{bold}{\encodingdefault}{\sfdefault}{bx}{n}
\def\gA{{\mathcal{A}}}
\def\gD{{\mathcal{D}}}
\def\gU{{\mathcal{U}}}
\newcommand{\E}{\mathbb{E}}
\newcommand{\tr}{\ensuremath{\mathrm{tr}}}
\newcommand{\dt}{\ensuremath{\mathsf{D}}}
\newcommand{\qdt}{\ensuremath{\mathsf{Q}}}
\newcommand{\rdt}{\ensuremath{\mathsf{R}}} 
\newcommand{\bs}{\ensuremath{\mathsf{bs}}} 
\newcommand{\fbs}{\ensuremath{\mathsf{fbs}}} 
\newcommand{\corr}{\ensuremath{\mathsf{Corr}}}
\newcommand{\bicorr}{\ensuremath{\mathsf{biCorr}}}
\newcommand{\sel}{\ensuremath{\mathsf{Sel}}}
\newcommand{\xor}{\text{\scshape Xor}}
\newcommand{\myAnd}{\text{\scshape And}}
\newcommand{\lr}{\ensuremath{\mathsf{LR}}} 
\newcommand{\olr}{\ensuremath{\mathsf{OLR}}} 
\newcommand{\tllr}{\ensuremath{\mathsf{TLLR}}} 
\newcommand{\otllr}{\ensuremath{\mathsf{OTLLR}}} 
\newcommand{\settled}{\ensuremath{\mathsf{S}}} 
\newcommand{\progress}{\ensuremath{\mathsf{P}}} 
\newcommand{\depth}{\ensuremath{\mathrm{depth}}} 
\newcommand{\bootstrap}{\ensuremath{\mathrm{bootstrap}}} 
\newcommand{\Input}{\ensuremath{\mathsf{Input}}} 
\newtheorem{theorem}{Theorem}
\newtheorem{lemma}{Lemma}
\newtheorem{corollary}{Corollary}
\newtheorem{conjecture}{Conjecture}
\theoremstyle{definition}
\newtheorem{definition}{Definition}
\theoremstyle{remark}
\newtheorem{claim}{Claim}
\newtheorem{fact}{Fact}
\begin{document}

\newgeometry{margin=1.25in,top=1.7in,bottom=1in}

\begin{center}
{\LARGE The Power of Many Samples in Query Complexity}
\\[10mm]

\large
\setlength\tabcolsep{0em}
\newcommand{\myPad}{\hspace{3em}}
\begin{tabular}{c@{\myPad}c@{\myPad}c}
Andrew Bassilakis &
Andrew Drucker &
Mika G\"o\"os\\[-1mm]
\small\slshape Stanford University&
\small\slshape University of Chicago&
\small\slshape Stanford University\\[2mm]
Lunjia Hu &
Weiyun Ma &
Li-Yang Tan\\[-1mm]
\small\slshape Stanford University&
\small\slshape Stanford University&
\small\slshape Stanford University\\
\end{tabular}

\vspace{8mm}

{\large\itshape\today}

\vspace{8mm}

\normalsize
\bf Abstract
\end{center}

\noindent
The randomized query complexity $\rdt(f)$ of a boolean function $f\colon\{0,1\}^n\to\{0,1\}$ is famously characterized (via Yao's minimax) by the least number of queries needed to distinguish a distribution~$\gD_0$ over $0$-inputs from a distribution $\gD_1$ over $1$-inputs, maximized over all pairs~$(\gD_0,\gD_1)$. We ask: Does this task become easier if we allow query access to infinitely many samples from either $\gD_0$ or~$\gD_1$? We show the answer is \emph{no}: There exists a hard pair $(\gD_0,\gD_1)$ such that distinguishing $\gD_0^\infty$ from~$\gD_1^\infty$ requires $\Theta(\rdt(f))$ many queries. As an application, we show that for any composed function~$f\circ g$ we have $\rdt(f\circ g) \geq \Omega(\fbs(f)\rdt(g))$ where $\fbs$ denotes fractional block sensitivity.

\vspace{2mm}

\setlength{\cftbeforesecskip}{.7em}
\tableofcontents

\thispagestyle{empty}
\setcounter{page}{0}

\newpage
\restoregeometry

\section{Introduction}

Randomized query complexity (see~\cite{Buhrman2002} for a classic survey) is often studied using Yao's minimax principle~\cite{Yao1977}. The principle states that for every boolean function $f\colon\{0,1\}^n\to\{0,1\}$
\begin{center}
{\bf Yao's minimax:}
$\quad\rdt_\epsilon(f) ~=~ \max_\gD \, \dt_\epsilon(f,\gD)$.
\end{center}

\begin{itemize} \setlength{\itemsep}{0.5em}
	\item Here $\rdt_\epsilon(f)$ is the randomized $\epsilon$-error query complexity of $f$. More precisely, $\rdt_\epsilon(f)$ equals the least number of queries a randomized algorithm (decision tree) must make to the input bits $x_i\in\{0,1\}$ of an unknown input $x\in\{0,1\}^n$ in order to output $f(x)$ with probability at least $1-\epsilon$ (where the probability is over the coin tosses of the algorithm). We often set $\epsilon = 1/3$ and omit $\epsilon$ from notation, as it is well known that this choice only affects constant factors in query complexity.
	\item $\gD$ is a distribution over the inputs $\{0,1\}^n$. We may assume wlog that $\gD$ is \emph{balanced}: $\gD=\frac{1}{2}\gD_0+\frac{1}{2}\gD_1$ where $\gD_b$ is a distribution over $f^{-1}(b)$.
	\item $\dt_\epsilon(f,\gD)$ is the \emph{distributional} $\epsilon$-error query complexity of $f$ relative to $\gD$. More precisely, $\dt_\epsilon(f,\gD)$ equals the least number of queries a \emph{deterministic} algorithm must make to an input $x\sim \gD$ in order to output $f(x)$ with probability at least $1-\epsilon$ (where the probability is over $x\sim \gD$).
\end{itemize}

\subsection{Correlated samples problem} \label{sec:corr-samples}

One way to think about the distributional complexity of $f$ relative to $\gD=\frac{1}{2}\gD_0+\frac{1}{2}\gD_1$ is as the following task: A deterministic algorithm is given query access to a sample from either $\gD_0$ or $\gD_1$ and it needs to decide which is the case. In this work, we ask: Does this task become easier if we allow query access to an unlimited number of independent samples from either $\gD_0$ or $\gD_1$? In short,
\begin{center}\itshape
Is it easier to distinguish $\gD_0^\infty$ from $\gD_1^\infty$ than it is to distinguish $\gD_0$ from $\gD_1$?
\end{center}
More formally, we define the \emph{correlated samples problem} for $f$ relative to $\gD=\frac{1}{2}\gD_0+\frac{1}{2}\gD_1$ by
\[
\corr_\epsilon(f,\gD) ~\coloneqq~ \min_{k\geq 1}
\textstyle \, \dt_\epsilon(f^k,\frac{1}{2}\gD_0^k+\frac{1}{2}\gD_1^k).
\]
Here $f^k\colon(\{0,1\}^n)^k\to\{0,1\}^k$ is the function that evaluates $k$ copies of $f$ on disjoint inputs. We also use the notation $\gD^k\coloneqq\gD\times\cdots\times \gD$ ($k$ times) for the $k$-fold product distribution. In particular, under $\frac{1}{2}\gD_0^k+\frac{1}{2}\gD_1^k$, the function $f^k$ outputs either $0^k$ or $1^k$; the correlated samples problem is to decide which is the case. We note that the expression to be minimized on the right side is a non-increasing function of $k$ (access to more samples is only going to help). We may also assume wlog that $k\leq n$ (when an algorithm queries a sample for the first time, we may assume it is the first unqueried sample so far).

\paragraph{Shaltiel examples.}
It is not hard to give examples of input distributions where access to multiple correlated samples \emph{does} help. Such examples were already discussed by Shaltiel~\cite{Shaltiel2004} in the context of direct product theorems. For instance, consider the $n$-bit $\xor_n$ function. It is well known that $\rdt_\epsilon(\xor_n)=n$ for all~$\epsilon>0$. Define a balanced input distribution (here $\gU$ is a uniform random bit in $\{0,1\}$)
\[
\gD~\coloneqq~
\begin{cases}
0\,\gU^{n-1} & \text{with probability 99\%},\\
1\,\gU\,0^{n-2}  & \text{with probability 1\%}.
\end{cases}
\]
This distribution is hard 99\% of the time: if the first bit is 0, an algorithm has to compute $\xor_{n-1}$ relative to $\gU^{n-1}$, which requires $n-1$ queries. For the remaining 1\%, the distribution is easy: if the first bit is~1, the output can be deduced from the second bit. Here multiple correlated samples help a lot (for $\epsilon=1/3$):
\begin{align*}
\dt(\xor_n,\gD)~&=~\Omega(n),\\
\corr(\xor_n,\gD)~&=~O(1).
\end{align*}
Indeed, given a single sample from $\gD$, an algorithm is likely to have to solve the hard case of the distribution. By contrast, given multiple correlated samples, we can query the first bit for a large constant number of samples. This will give us a high chance to encounter at least one easy sample.

\paragraph{Error reduction.}
An important fact (which fails in the single-sample setting!)~is that we can amplify the success probability of any algorithm for correlated samples. This is achieved by a variant of the usual trick: repeatedly run the algorithm on fresh samples to gain more confidence about the output.\footnote{
In more detail: An algorithm $T$ with error $1/2-\delta$ has $|p_0-p_1|\geq2\delta$ where $p_i \coloneqq \Pr[T(x_i)=1]$ for $x_i\sim\gD_i^k$. Reducing error below~$\epsilon>0$ boils down to distinguishing two random coins with heads-probabilities $p_0$ and $p_1$. Given multiple samples from one of the coins, Chernoff bounds state that $O(\log(1/\epsilon)/\delta^2)$ samples are enough to tell which coin the samples came from.}
\begin{fact} \label{fact:amplify}
$\corr_\epsilon(f,\gD)\leq O(\log(1/\epsilon)/\delta^2)\cdot\corr_{1/2-\delta}(f,\gD)$ for every $(f,\gD)$. \qed
\end{fact}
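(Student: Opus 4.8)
The plan is to flesh out the boosting argument sketched in the footnote: from an optimal deterministic decision tree for the $(1/2-\delta)$-error correlated samples problem I would build one with error at most $\epsilon$ by running it independently on $m=O(\log(1/\epsilon)/\delta^2)$ pairwise-disjoint batches of fresh samples and taking a threshold majority vote, the threshold being placed so that the vote concentrates on the correct side.

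First I would let $d\coloneqq\corr_{1/2-\delta}(f,\gD)$, fix $k^\ast\geq 1$ with $\dt_{1/2-\delta}(f^{k^\ast},\tfrac12\gD_0^{k^\ast}+\tfrac12\gD_1^{k^\ast})=d$, and let $T$ be a depth-$d$ deterministic decision tree achieving this. Since under $\tfrac12\gD_0^{k^\ast}+\tfrac12\gD_1^{k^\ast}$ the value of $f^{k^\ast}$ is always a constant string $b^{k^\ast}$, the task reduces to identifying the bit $b$, so we may regard $T$ as outputting a single bit. Writing $p_b\coloneqq\Pr_{\vx\sim\gD_b^{k^\ast}}[T(\vx)=1]$, the average success probability $\tfrac12(1-p_0)+\tfrac12 p_1$ of $T$ is at least $\tfrac12+\delta$, which forces $p_1-p_0\geq 2\delta$; in particular the midpoint $t\coloneqq(p_0+p_1)/2$ satisfies $t-p_0\geq\delta$ and $p_1-t\geq\delta$.

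Next I would define the amplified tree $T'$: put $m\coloneqq\lceil\ln(1/\epsilon)/(2\delta^2)\rceil=O(\log(1/\epsilon)/\delta^2)$, regard the input as $m$ pairwise-disjoint batches of $k^\ast$ fresh samples, run $T$ on each batch to obtain bits $\beta_1,\dots,\beta_m\in\{0,1\}$, and output $1$ iff $\tfrac1m\sum_i\beta_i\geq t$. Because the batches are disjoint, under $\gD_b^{mk^\ast}$ the $\beta_i$ are i.i.d.\ Bernoulli$(p_b)$, so a standard Chernoff/Hoeffding bound gives $\Pr[\tfrac1m\sum_i\beta_i\geq t\mid b=0]\leq e^{-2m(t-p_0)^2}\leq e^{-2m\delta^2}\leq\epsilon$ and symmetrically $\Pr[\tfrac1m\sum_i\beta_i<t\mid b=1]\leq\epsilon$. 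Hence $T'$ is a deterministic depth-$md$ decision tree that solves the correlated samples problem for $f^{mk^\ast}$ relative to $\tfrac12\gD_0^{mk^\ast}+\tfrac12\gD_1^{mk^\ast}$ with error at most $\epsilon$, and so by definition of the minimum $\corr_\epsilon(f,\gD)\leq\dt_\epsilon(f^{mk^\ast},\tfrac12\gD_0^{mk^\ast}+\tfrac12\gD_1^{mk^\ast})\leq md=O(\log(1/\epsilon)/\delta^2)\cdot\corr_{1/2-\delta}(f,\gD)$.

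I do not expect a genuine obstacle here; the only points deserving a sentence are that $T'$ is a legitimate instance of the model (it is deterministic, it queries only $mk^\ast$ pairwise-disjoint samples, and since $f^{mk^\ast}$ evaluates to the constant string $b^{mk^\ast}$ it suffices to report the single bit $b$), and that the $\min$ in the definition of $\corr_\epsilon$ imposes no ceiling on the sample count, so using $mk^\ast$ samples (possibly larger than $n$) is allowed even though a single run of $T$ is optimized at some $k^\ast\leq n$. I would also flag the structural reason the argument is confined to the multi-sample world: the boosting consumes $m$ fresh independent copies of the input, which simply do not exist when only one sample is given, and this asymmetry is precisely what the paper exploits.
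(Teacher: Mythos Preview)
Your proposal is correct and is precisely the argument the paper sketches in its footnote: derive $|p_0-p_1|\geq 2\delta$ from the average error bound, run the base tree on $m=O(\log(1/\epsilon)/\delta^2)$ fresh disjoint batches, and threshold the empirical mean via Hoeffding/Chernoff. You have simply filled in the details the paper leaves implicit (the choice of threshold $t=(p_0+p_1)/2$, the depth accounting $md$, and the observation that the $\min_k$ in the definition allows $k=mk^\ast$), so there is nothing to add.
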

The aforementioned Shaltiel example $(\xor_n,\gD)$ can alternatively be computed as follows: By querying the first two bits of a single sample $x\sim\gD$ one can predict $\xor_n(x)$ to within error $49.5\%$. Now apply \autoref{fact:amplify} to reduce the error below $1/3$ at the cost of a constant-factor blowup in query cost.

\subsection{Main result}
We study whether Shaltiel examples can be avoided if we restrict our attention to the hardest possible input distribution. Namely, we define a distribution-free complexity measure by
\[
\corr_\epsilon(f) ~\coloneqq~ \max_\gD \, \corr_\epsilon(f,\gD).
\]
Our main result is that multiple correlated samples do not help for the hardest distribution.

\begin{theorem} \label{thm:main}
$\corr(f) = \Theta(\rdt(f))$ for any (partial) boolean function $f$.
\end{theorem}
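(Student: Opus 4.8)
The bound $\corr(f)\le\rdt(f)$ is immediate: taking $k=1$ in the definition of $\corr(f,\gD)$ and applying Yao's minimax gives $\corr(f,\gD)\le\dt(f,\gD)\le\rdt(f)$ for every $\gD$, hence $\corr(f)\le\rdt(f)$. So the content is the lower bound $\corr(f)=\Omega(\rdt(f))$, for which it suffices to produce one balanced distribution $\gD=\tfrac12\gD_0+\tfrac12\gD_1$ (with $\gD_b$ supported on $f^{-1}(b)$) that is \emph{robustly} hard, meaning correlated samples provably do not help against it. The Shaltiel examples show we cannot simply take any $\gD$ witnessing $\dt(f,\gD)=\rdt(f)$: the danger is a small ``pocket of easiness'' in $\gD$ that many samples can fish for. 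So the plan is (i) to pin down the right robustness property, (ii) to show a distribution with that property always exists, and (iii) to deduce the theorem from it.

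Write $R:=\rdt_{1/3}(f)$, and for a balanced $\gD$ and a query strategy (decision-tree skeleton) $S$ let $h_\gD(S)$ denote the squared Hellinger distance between the distributions of the leaf reached by $S$ when its single input is drawn from $\gD_0$ versus from $\gD_1$. The robustness property I would aim for is: \emph{there is a balanced $\gD$ for which $h_\gD(S)=O(\depth(S)/R)$ for every $S$} --- i.e.\ the distinguishing information extracted from one sample grows only linearly in the number of queries, at rate $O(1/R)$. (This already forces $\dt_{1/3}(f,\gD)=\Theta(R)$, so such a $\gD$ is automatically near-hardest, but it says much more; the Shaltiel $\gD$ fails it badly.) Granting this, the theorem follows by a Hellinger chain rule across the $k$ samples. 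Fix $k$ and a depth-$m$ decision tree $T$ on $(\{0,1\}^n)^k$ under $\tfrac12\gD_0^k+\tfrac12\gD_1^k$. Tracking how the squared Hellinger distance between the two worlds $b\in\{0,1\}$ grows as $T$ queries bits one at a time, the increment at a step that queries coordinate $(i,j)$ is the \emph{local} squared Hellinger distance between the conditional laws of that bit under the two worlds, which --- since the samples are independent given $b$ --- depends only on sample $i$'s own view so far. Summing and regrouping by sample, $H^2(\text{transcript}\mid 0,\ \text{transcript}\mid 1)\le\sum_{i=1}^k h_\gD(S_i)$, where $S_i$ is the (adaptive, depth-$d_i$) sub-strategy that queries sample $i$ and $\sum_i d_i=m$; by the robustness property each term is $O(d_i/R)$, so the total is $O(m/R)$. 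Since total variation distance is at most $\sqrt2$ times Hellinger distance, $T$ distinguishes the two worlds with advantage $O(\sqrt{m/R})$, which is below $1/3$ once $m\le cR$ for a small absolute constant $c$. Hence $\dt_{1/3}(f^k,\tfrac12\gD_0^k+\tfrac12\gD_1^k)\ge cR$ for every $k$, i.e.\ $\corr(f)\ge\corr(f,\gD)\ge cR=\Omega(\rdt(f))$. (Amplification, \autoref{fact:amplify}, is not needed for this, though one could use it to loosen the constant bookkeeping.)

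The main obstacle is step (ii): producing the robustly hard $\gD$. The route I would take is minimax duality. The quantity $\min_\gD\max_S\,h_\gD(S)/\depth(S)$ is the value of a zero-sum game whose payoff is jointly convex in $(\gD_0,\gD_1)$ (Hellinger distance is jointly convex and $\gD_b\mapsto S_\#\gD_b$ is linear) and affine in the maximiser's mixed strategy, so a minimax theorem (Sion's) applies, and it is enough to show that for \emph{every} distribution $\mu$ over short strategies there is a single balanced $\gD$ with $\E_{S\sim\mu}[h_\gD(S)/\depth(S)]=O(1/R)$. This is where $R=\rdt(f)$ must enter: by Yao's minimax, a strategy that distinguishes $\gD_0$ from $\gD_1$ well is, read off at its leaves, a good distributional algorithm for $f$, so no short strategy can distinguish well simultaneously against all $\gD$. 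Converting this ``weak distinguisher for every $\gD$'' into a single $\gD$ that defeats a $\mu$-random short strategy is a hardcore-lemma / boosting-type step, and here lies the technical heart. A black-box appeal to boosting combines $\Theta(R)$ weak learners of depth $\Theta(R)$ into a tree of depth $\Theta(R^2)$, which only contradicts $\rdt(f)=R$ when the weak advantage is $\Omega(1/\sqrt R)$ --- yielding the weaker bound $\corr(f)=\Omega(\sqrt{\rdt(f)})$. Obtaining the tight bound requires a sharper combination: I would try to run the boosting/hardcore argument in the Hellinger (or KL) metric itself, so that per-query contributions accumulate additively without the square-root loss of passing through total variation, or else replace it by a potential/weight update tuned to the decision-tree structure. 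Finally one should verify that the whole argument goes through verbatim for partial $f$ (restrict $\gD_b$ to $f^{-1}(b)$ everywhere) and that the error constant $1/3$ may be taken to be any constant in $(0,1/2)$ --- for $\corr$ this last point is exactly \autoref{fact:amplify}.
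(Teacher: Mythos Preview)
Your approach is genuinely different from the paper's, and the gap you yourself flag in step~(ii) is real and not closed by what you sketch. The paper does \emph{not} construct a single robustly-hard distribution. It argues by contrapositive: if $\corr(f)\le L$, then for \emph{every} pair $(\gD_0,\gD_1)$---in particular every conditional pair $(\gD_0|_{v_*},\gD_1|_{v_*})$ obtained by fixing some query answers---there is a depth-$L$ correlated-samples algorithm, which is (after a short likelihood-ratio argument) a depth-$L$ ``overall likelihood booster'': with good probability under $\gD_1^k$ it drives $\prod_j \gD_1(\ell_j)/\gD_0(\ell_j)$ above a constant. The paper then builds, for any fixed $\gD$ and very large $K$, a tree on $K$ samples that in each phase picks $k$ unsettled samples whose query histories happen to coincide, applies the booster for the corresponding \emph{conditional} pair, and tracks progress via a truncated log-likelihood potential. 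A submartingale analysis shows this potential has strictly positive drift per phase, so after $O(K)$ phases (total depth $O(KL)$) almost all samples are individually settled; embedding a single fresh sample at a random coordinate then extracts a depth-$O(L)$ single-sample likelihood booster for $\gD$, whence $\rdt(f)=O(L)$ by Yao. The crucial point is that the hypothesis $\corr(f)\le L$ hands you a good algorithm for \emph{all} distributions simultaneously, and the paper composes these adaptively on the conditionals that arise during the computation---no minimax or hardcore step is ever needed. Your route, by contrast, is in the spirit of the independent Ben-David--Blais work the paper cites, but as written it stalls at $\corr(f)=\Omega(\sqrt{\rdt(f)})$ and you offer no concrete mechanism for removing the quadratic loss.

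There is also a smaller issue in step~(iii): squared Hellinger does not satisfy the additive chain rule you invoke (the per-query increment is not simply the local $H^2$; the clean product formula holds only for independent coordinates, not adaptively). The decomposition you want works exactly for KL divergence---chain rule, regroup the terms by which sample they query, then Pinsker at the end---so the robustness property you actually need is that the KL between the leaf distributions of $S$ under $\gD_1$ versus $\gD_0$ is $O(\depth(S)/R)$ for every single-sample strategy $S$. Note also that your sub-strategy $S_i$ is itself random (it depends on the other samples' answers), so you must apply the property conditionally on those and then average; this is fine for KL because it is linear under mixtures of strategies, but it is one more reason to abandon Hellinger here. None of this is fatal to your plan, but it means the object whose existence you must establish in step~(ii) is the KL-linear-rate distribution, and that is precisely the hard step you leave open.
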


The main challenge in proving \autoref{thm:main} is precisely the existence of Shaltiel examples: How to construct hard distributions that do not contain any hidden easy parts? 
We resolve it by building decision trees that can exploit the easy parts not only in its own input distribution, but in various other distributions as well.

\subsection{Application 1: Selection problem}

Next we describe a consequence of our main result to a natural query task that we dub the \emph{selection problem}. A similar problem, called \emph{choose}, was studied by \cite{Beimel2014} in communication complexity.

\bigskip\noindent
Fix an $n$-bit function $f$ together with an input distribution $\gD$. In the \emph{$k$-selection problem} for $(f,\gD)$ the input is a random $kn$-bit string $x=(x^1,\ldots,x^k)\sim \gD^k$, and the goal is to output $(i,f(x^i))$ for some $i\in [k]$. That is, the algorithm gets access to $k$ independent samples from $\gD$ and it selects one of them to solve. We define
\begin{align*}
k\text{-}\sel_\epsilon(f,\gD) ~&\coloneqq~ \text{$\epsilon$-error query complexity of $k$-selection for $(f,\gD)$},\\
\sel_\epsilon(f,\gD) ~&\coloneqq~\textstyle \min_{k\geq 1} \, k\text{-}\sel_\epsilon(f,\gD), \\
\sel_\epsilon(f) ~&\coloneqq~\textstyle \max_{\gD} \, \sel_\epsilon(f,\gD).
\end{align*}
The selection problem is interesting because it, too, is subject to Shaltiel examples: for $(\xor_n,\gD)$ as described in \autoref{sec:corr-samples}, we have $\sel(\xor_n,\gD)=O(1)$ using the same idea of searching for an easy sample.

The following relates selection to correlated samples; see \autoref{sec:selection} for the proof.
\begin{theorem} \label{thm:selection}
The correlated samples problem is easier than selection:

\begin{enumerate}
	\item $\corr(f,\gD) \leq O(\sel(f,\gD))$ for every $(f,\gD)$.
	\item There exists an $n$-bit $(f,\gD)$ such that $\sel(f,\gD) = \Omega(n)$ but $\corr(f,\gD)=O(1)$.
	\item Selection does not admit efficient error reduction (as in~\autoref{fact:amplify}).
\end{enumerate}
\end{theorem}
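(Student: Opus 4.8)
The three parts are essentially independent; part~1 is the substantive one.

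\textbf{Part 1 ($\corr(f,\gD)\le O(\sel(f,\gD))$).} Let $A$ be a $k$-selection algorithm for $(f,\gD)$ of cost $s=\sel(f,\gD)$ and error $1/3$. The natural plan is: in the correlated samples problem, feed $A$ a batch of $k$ samples obtained (via query access) all from $\gD_b$; since every sample is then a $b$-input, the value $f(x^i)$ that $A$ reports must equal $b$, so $A$'s output reveals $b$. The snag is that $A$ is only promised to succeed on the mixed distribution $\gD^k=(\tfrac12\gD_0+\tfrac12\gD_1)^k$, not on the pure product $\gD_b^k$, and a priori $A$ could behave adversarially on the probability-$2^{-k}$ event that all $k$ samples share a type --- exactly the kind of hidden structure that Shaltiel examples exploit. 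To handle this I would feed $A$ an input whose $k$ coordinates are, independently, each either a fresh real $\gD_b$-sample (costing queries) or a fresh \emph{internally generated} sample from $\gD_{\hat b}$, where $\hat b\in\{0,1\}$ is a guessed bit; when $\hat b=1-b$ the coordinate law is $\tfrac12\gD_b+\tfrac12\gD_{1-b}=\gD$, so the input is distributed \emph{exactly} as $\gD^k$ and $A$'s guarantee kicks in, while the internally generated coordinates carry a known $f$-value, so $A$'s answer can be converted into information about $b$. Since \autoref{fact:amplify} lets us amplify any constant advantage, the whole thing reduces to showing that, over many fresh such batches grouped by the value of $\hat b$, the empirical transcript distribution of $A$ on the ``wrong-guess'' group (which is $A$ run on $\gD_b^k$) is separated from that on the ``right-guess'' group (which is $A$ run on $\gD^k$). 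This separation can fail only if $A$'s observable behaviour coincides on $\gD_0^k$, $\gD_1^k$, and $\gD^k$ --- intuitively impossible for a correct $k$-selection algorithm making only $s$ queries, since on a mixed input such an $A$ must commit to a sample whose queried pattern is genuinely informative about its type. Making this quantitative --- i.e.\ bounding the relevant statistical distance below by a constant, so that the amplification overhead stays $O(1)$ rather than scaling with $k$ --- is the step I expect to be the main obstacle, and it is precisely where the existence of Shaltiel-type examples makes the argument delicate.

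\textbf{Part 2 ($\sel(f,\gD)=\Omega(n)$ but $\corr(f,\gD)=O(1)$).} I would take $f=\xor_n$ and $\gD=\tfrac12 S_0+\tfrac12 S_1$, where $S_b$ is the distribution on $\xor_n^{-1}(b)$ that draws the first bit from $\mathrm{Bernoulli}(\tfrac12+(-1)^b\eps_0)$ for a small constant $\eps_0<1/6$, draws bits $2,\dots,n-1$ uniformly, and sets bit $n$ to the parity-completion forcing $\xor_n=b$. First, $\corr(f,\gD)=O(1)$: query only the first bit of many correlated samples --- its empirical fraction of $1$'s concentrates at $\tfrac12+\eps_0$ or $\tfrac12-\eps_0$ according to $b$, so $O(1/\eps_0^2)=O(1)$ samples and hence $O(1)$ queries identify $b$ with error below $1/3$. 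Second, $\sel(f,\gD)=\Omega(n)$: for $x\sim\gD^k$ each coordinate is an independent $S_{f(x^i)}$-sample, and a short case analysis (the only care being to check that querying all-but-one bit of such a sample still reveals nothing about its type) shows that any set of fewer than $n$ of a coordinate's bits leaves $\xor_n$ of that coordinate undetermined up to bias $O(\eps_0)$; as the coordinates are independent, an algorithm that outputs $(i,f(x^i))$ with probability $\ge 2/3$ must, with constant probability, spend $\ge n$ queries on the coordinate it commits to, so $k\text{-}\sel(f,\gD)\ge n$ for every~$k$.

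\textbf{Part 3 (no efficient error reduction).} The same $(f,\gD)$ works: after a single query, predicting $\xor_n(x)=1-x_1$ has error exactly $\tfrac12-\eps_0$, so $\sel_{1/2-\eps_0}(f,\gD)\le 1$, whereas the calculation from part~2 gives $\sel_{1/3}(f,\gD)=\Omega(n)$, and independent samples cannot be pooled to beat the $\eps_0$ bias without $n$ queries on some single sample; hence no bound of the form $\sel_{1/3}(f,\gD)\le\poly(1/\eps_0)\cdot\sel_{1/2-\eps_0}(f,\gD)$ can hold, in contrast with \autoref{fact:amplify} for $\corr$. (A stand-alone alternative: let $\gD$ be uniform on $\{0,1\}^n$ with probability $1-2\delta$ and, with probability $2\delta$, a string whose first bit equals the parity of the whole string; one query then gives advantage exactly $\delta$, but nothing beyond that without reading all $n$ bits, so $\sel_{1/2-\delta}(f,\gD)\le 1$ while $\sel_{1/3}(f,\gD)=\Omega(n)$.) The required computations here are routine variants of the $\sel=\Omega(n)$ argument from part~2.
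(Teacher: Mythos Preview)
Your Part~1 sets up exactly the right randomization---each coordinate is independently a real $\gD_b$-sample or a freshly generated $\gD_{\hat b}$-sample, so that when $\hat b=1-b$ the input to $A$ is distributed as $\gD^k$---but then you detour into comparing empirical transcript distributions and leave the key quantitative bound as an acknowledged obstacle. In fact no such comparison is needed; your setup already finishes the proof with one more line: output $c\oplus r_i$, where $r_i\in\{0,1\}$ records whether the reported index $i$ was real ($r_i=0$) or generated ($r_i=1$). When $\hat b=1-b$ (probability~$\tfrac12$), coordinate $i$ came from $\gD_{b\oplus r_i}$, so $f(x^i)=b\oplus r_i$, and whenever $A$ is correct we output $c\oplus r_i=b$; success $\ge 2/3$. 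When $\hat b=b$, real and generated coordinates are identically $\gD_b$-distributed, so the labels $(r_j)$ are independent of $A$'s entire transcript, $r_i$ is a fresh fair coin, and $c\oplus r_i$ is uniform; success exactly $1/2$. Overall success $\ge 7/12$, and now amplify via \autoref{fact:amplify}. The paper arrives at the same reduction but factors it cleanly through an auxiliary \emph{bi-correlated samples} problem $\bicorr$ (distinguish $(\gD_0\times\gD_1)^k$ from $(\gD_1\times\gD_0)^k$): first $\bicorr_\epsilon\le\sel_\epsilon$ by precisely this random-selection trick (pick one member of each pair uniformly, yielding $\gD^k$ in either case, so $A$'s output together with the recorded choice determines the case), then $\corr\le O(\bicorr)$ by a one-line triangle-inequality hybrid through $(\gD_0\times\gD_0)^k$.

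Your Parts~2 and~3 are correct and use the same idea as the paper: $\xor_n$ under a distribution in which one cheap bit is $\Theta(\eps_0)$-correlated with the parity (so $\corr=O(1)$ and $\sel_{1/2-\Theta(\eps_0)}\le 1$), while any $n{-}1$ queries to a single sample leave its parity only $O(\eps_0)$-biased (so $\sel_{1/3}=\Omega(n)$). The paper's specific encoding---two equal leading bits that match the parity with probability~$1\%$---differs from yours only cosmetically.
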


Combining the first item of \autoref{thm:selection} with our main result (\autoref{thm:main}) we conclude that multiple samples do not help in the selection problem for the hardest distribution.
\begin{corollary}
$\sel(f) = \Theta(\rdt(f))$ for any (partial) boolean function $f$. \qed
\end{corollary}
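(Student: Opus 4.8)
The plan is to sandwich $\sel(f)$ between $\rdt(f)$ and a constant multiple of $\rdt(f)$, reading off each bound from results already in hand.

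For the upper bound $\sel(f) \le O(\rdt(f))$, I would first note that the $1$-selection problem for a pair $(f,\gD)$ is simply the task of evaluating $f$ on a single sample $x \sim \gD$: the only legal output index is $i = 1$, so the algorithm must produce $f(x^1)$, whence $1\text{-}\sel_\epsilon(f,\gD) = \dt_\epsilon(f,\gD)$. Since $\sel_\epsilon(f,\gD) = \min_{k \ge 1} k\text{-}\sel_\epsilon(f,\gD) \le 1\text{-}\sel_\epsilon(f,\gD)$, taking the maximum over $\gD$ gives $\sel_\epsilon(f) \le \max_\gD \dt_\epsilon(f,\gD) = \rdt_\epsilon(f)$ by the easy direction of Yao's minimax (fix the best coins of an optimal randomized tree for the given $\gD$). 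Setting $\epsilon = 1/3$ yields $\sel(f) \le \rdt(f)$.

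For the lower bound $\sel(f) \ge \Omega(\rdt(f))$, I would combine the two theorems. Item~1 of \autoref{thm:selection} gives $\corr(f,\gD) \le O(\sel(f,\gD))$ for every balanced $\gD$; taking the maximum over $\gD$ on both sides yields $\corr(f) \le O(\sel(f))$. On the other hand, \autoref{thm:main} gives $\corr(f) = \Theta(\rdt(f))$, so in particular $\corr(f) \ge \Omega(\rdt(f))$. Chaining the two inequalities, $\sel(f) \ge \Omega(\corr(f)) \ge \Omega(\rdt(f))$.

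Putting the two bounds together gives $\sel(f) = \Theta(\rdt(f))$. There is no real obstacle at this point: all of the work is already done inside \autoref{thm:main} (constructing hard input distributions with no hidden easy parts) and inside the reduction of \autoref{thm:selection}. The only things needing a little care are bookkeeping: checking that the maxima defining $\corr(f)$ and $\sel(f)$ range over the same family of (balanced) distributions so that the $\max$-over-$\gD$ step in the lower bound is valid, confirming that randomness does not help against the fixed input distribution $\gD$ in the $1$-selection reduction, and noting that we may fix a single error parameter $\epsilon = 1/3$ throughout — which is harmless here, even though \autoref{thm:selection}(3) cautions that selection admits no efficient error reduction in general, since the corollary makes no claim about the $\epsilon$-dependence.
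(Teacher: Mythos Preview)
Your proposal is correct and matches the paper's own reasoning: the corollary is stated with a \qed immediately after the sentence ``Combining the first item of \autoref{thm:selection} with our main result (\autoref{thm:main}) we conclude that multiple samples do not help in the selection problem for the hardest distribution,'' which is precisely your lower-bound chain $\sel(f)\ge\Omega(\corr(f))\ge\Omega(\rdt(f))$, and the upper bound $\sel(f)\le\rdt(f)$ via $1$-selection is the trivial direction left implicit in the paper.
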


\subsection{Application 2: Randomized composition}

We give another application of our main result to the \emph{randomized composition conjecture} studied in~\cite{Ben-David2016,Anshu2018,Gavinsky2019,Ben-David2020}. In fact, this application is what originally motivated our research project!

\bigskip\noindent
For an $n$-bit function $f$ and an $m$-bit function $g$ we define their composition
\[
f\circ g\colon (\{0,1\}^m)^n \to\{0,1\}
\qquad\text{such that}\qquad
(f\circ g)(x^1,\ldots,x^n)~\coloneqq~f(g(x^1),\ldots,g(x^n)).
\]
A \emph{composition theorem} aims to understand the query complexity of $f\circ g$ in terms of $f$ and $g$. Such theorems are known for deterministic query complexity, $\dt(f\circ g)=\dt(f)\dt(g)$~\cite{Savicky2002,Tal2013,Montanaro2014}, and quantum query complexity, $\qdt(f\circ g)=\Theta(\qdt(f)\qdt(g))$~\cite{Hoyer2007,Reichardt2011}. The conjecture in the randomized case is:
\begin{conjecture}
$\rdt(f\circ g) \geq \Omega(\rdt(f)\rdt(g))$ for all boolean functions $f$ and $g$.
\end{conjecture}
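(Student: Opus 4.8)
The displayed statement is the randomized composition conjecture, which is open in general; what the machinery above lets one realistically attack is a partial version, and the plan is to push a single hard-distribution construction as far toward the $\rdt(f)\cdot\rdt(g)$ multiplier as the $\corr$ machinery allows. By Yao's minimax, fix a balanced distribution $\gD_f=\tfrac12\gD_{f,0}+\tfrac12\gD_{f,1}$ with $\dt(f,\gD_f)=\Theta(\rdt(f))$, and by \autoref{thm:main} fix a balanced $\gD_g=\tfrac12\gD_{g,0}+\tfrac12\gD_{g,1}$ that is \emph{Shaltiel-proof}, i.e.\ $\corr(g,\gD_g)=\Theta(\rdt(g))$, so that no $o(\rdt(g))$-query strategy distinguishes $\gD_{g,0}^\infty$ from $\gD_{g,1}^\infty$. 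Form the composed hard distribution $\gE$ on $(\{0,1\}^m)^n$: draw $b=(b_1,\dots,b_n)\sim\gD_f$ and then $x^i\sim\gD_{g,b_i}$ independently for each $i$; then $(f\circ g)(x)=f(b)$, and the target is $\dt(f\circ g,\gE)\ge\Omega(\rdt(f)\cdot\rdt(g))$.

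The reduction I would run is a lazy-simulation argument. Given the optimal deterministic tree $T$ for $(f\circ g,\gE)$, build a randomized algorithm $A$ for $f$ under $\gD_f$: on input $b$, simulate $T$, and whenever $T$ first touches block $i$ sample $b_i$, thereafter answering $T$'s queries into block $i$ from a fresh draw $x^i\sim\gD_{g,b_i}$. The crux is a ``cost-accounting'' lemma: $T$ cannot report $f(b)$ unless it has resolved the labels on a collection of blocks that is hard for $f$ under $\gD_f$, and the Shaltiel-proofness of $\gD_g$ forces each such resolution to cost $\Omega(\rdt(g))$ queries \emph{even though} $T$ sees many blocks at once --- the reason for stating the hardness of $\gD_g$ in the $\corr$ (multi-sample) form rather than the single-sample form is precisely to rule out an algorithm that amortizes the resolution of one block's label against samples drawn with that same label elsewhere. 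Dividing the query cost of $T$ by $\Omega(\rdt(g))$ would then yield a strategy for $f$ of cost $o(\rdt(f))$ whenever $\dt(f\circ g,\gE)$ is too small, a contradiction.

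The step I expect to be the real obstacle --- and the reason the conjecture remains open --- is making the cost-accounting lemma true with the \emph{full} $\rdt(f)$ on the outside. A tree $T$ need not resolve any single block's label; it may extract a sliver of side-correlated information from each of many blocks, each sliver costing far less than $\rdt(g)$, and then aggregate these slivers --- adaptively, using what it has already learned in some blocks to decide how hard to probe the others --- to recover $f(b)$ without ever paying $\rdt(g)$ on any one block. \autoref{thm:main} tames the single-block incarnation of this loophole, which is exactly enough to carry the argument through with $\fbs(f)$ in place of $\rdt(f)$: $\fbs(f)$ is the LP value that measures how much ``fractional block resolution'' $f$ forces while conceding that blocks may be resolved only partially and in aggregate, yielding $\rdt(f\circ g)\ge\Omega(\fbs(f)\cdot\rdt(g))$. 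Closing the gap between $\fbs(f)$ and $\rdt(f)$ as the multiplier --- i.e.\ showing that adaptive cross-block aggregation of partial information buys nothing beyond the fractional bound, a phenomenon with no counterpart in the deterministic or quantum composition theorems --- is the genuinely hard and still-open heart of the conjecture.
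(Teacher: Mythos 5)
You are right that this statement is an open conjecture, not a theorem of the paper: the paper offers no proof of it (and notes that Ben-David and Blais have refuted it for partial functions), so declining to claim a proof is the correct call. Your diagnosis of the obstruction is also essentially the paper's: a tree for $f\circ g$ may extract partial, adaptively aggregated information from many blocks without ever paying $\rdt(g)$ on any one of them, and the machinery of \autoref{thm:main} only yields $\fbs(f)$ as the outer multiplier (\autoref{thm:composition}).

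Where your sketch of the attainable partial result diverges from the paper is in the direction and mechanism of the reduction. You propose a forward ``lazy simulation'': run the tree $T$ for $f\circ g$ on input $b$ to $f$, answering block-$i$ queries from a fresh $x^i\sim\gD_{g,b_i}$, and then invoke a cost-accounting lemma charging $\Omega(\rdt(g))$ per resolved block. That cost-accounting lemma, in the strength needed to put $\rdt(f)$ (or even $\fbs(f)$) on the outside, is precisely what nobody knows how to prove, and the paper does not attempt it. Instead, Section 6 runs the reduction in the opposite direction: it uses the $\fbs(f)$ linear program to find an input $y$ and a single sensitive block $B_1$ that $T$ queries at most $q/\fbs(f)$ times on average over $\gD_y$, truncates $T$ at $5q_1$ queries to that block, and then embeds an instance of the \emph{bi-correlated samples} problem for $(g,\gD)$ into block $B_1$ while self-generating all other blocks. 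The hardness of correlated samples (\autoref{thm:main} via \autoref{lem:bicorr}) is what certifies that this cheap-on-one-block distinguisher cannot exist unless $q\geq\Omega(\fbs(f)\rdt(g))$. So the role of $\corr$ is not to price each block resolution inside a simulation of $T$, but to supply a distribution $\gD$ for $g$ against which even unlimited correlated samples require $\Omega(\rdt(g))$ queries, which is then planted in the one block $T$ cannot afford to examine. If you want to pursue the full conjecture, note that this embedding technique inherently localizes to a single sensitive block, which is why it tops out at $\fbs(f)$; upgrading it would require a genuinely different way of charging $T$'s queries across many blocks simultaneously.
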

Gavinsky et al.~\cite{Gavinsky2019} have shown that the conjecture fails if $f$ is allowed to be a \emph{relation}. They also show $\rdt(f\circ g)\geq \Omega(\rdt(f)\rdt(g)^{1/2})$ for any relation $f$ and partial function~$g$. In a very recent work (concurrent to ours) Ben-David and Blais~\cite{Ben-David2020} have found a counterexample to the randomized conjecture for partial~$f$ and $g$, albeit with a tiny query complexity compared to input length; see also \autoref{sec:ind-work}. The conjecture is still open for total functions.

\paragraph{Fractional block sensitivity.}
We show a new composition theorem in terms of \emph{fractional block sensitivity}~$\fbs(f)$, introduced by~\cite{Tal2013,Gilmer2016}; see also~\cite{Kulkarni2016,Ambainis2018}. This measure is at most randomized query complexity, $\fbs(f)\leq O(\rdt(f))$, and it is equivalent to \emph{randomized certificate complexity}~\cite{Aaronson2008}.

Let us define $\fbs(f)$ for an $n$-bit $f$. We say that a block $B\subseteq[n]$ is \emph{sensitive} on input $x$ iff $f(x)\neq f(x^B)$ where $x^B$ is~$x$ but with bits in $B$ flipped. Fix an input $x$ and introduce a real weight $w_B\in[0,1]$ for each sensitive block $B$ of $x$. Define $\fbs(f,x)$ as the optimum value of the following linear program
\begin{center}
\begin{tabular}{rll}
$\max$ & $\sum_B w_B$ \\[2mm]
$\textit{subject to}$
& $\sum_{B\ni i} w_B \leq 1$, & $\forall i\in[n]$, \\[1mm]
& $w_B\geq 0$, & $\forall B$.
\end{tabular}
\end{center}
Finally, define $\fbs(f)\coloneqq \max_x \fbs(f,x)$. For comparison, the more usual \emph{block sensitivity} $\bs(f)$~\cite{Nisan1991} is defined the same way except with the integral constraint $w_B\in\{0,1\}$. In particular $\bs(f)\leq \fbs(f)$, and moreover a polynomial gap (power $1.5$) between the two is known for a total function~\cite{Gilmer2016}.

\bigskip\noindent
We make progress towards the composition conjecture; see~\autoref{sec:composition} for the proof.
\begin{theorem} \label{thm:composition}
$\rdt(f\circ g)\geq\Omega(\fbs(f) \rdt(g))$ for any (partial) boolean functions $f$ and $g$.
\end{theorem}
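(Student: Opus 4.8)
The plan is to invoke Yao's minimax, so it suffices to exhibit one hard input distribution $\gE$ for $f\circ g$ with $\dt_\epsilon(f\circ g,\gE)=\Omega(\fbs(f)\,\rdt(g))$. Fix an input $a$ on which $f$ is defined and $\fbs(f,a)=\fbs(f)=:s$, together with optimal fractional weights $\{w_B\}$ on the sensitive blocks of $a$ (so $\sum_B w_B=s$ and $\sum_{B\ni i}w_B\le 1$ for every coordinate $i$), and fix a hard pair $(\gD_0,\gD_1)$ for $g$ witnessing \autoref{thm:main}, i.e.\ $\corr(g,\tfrac12\gD_0+\tfrac12\gD_1)=\Omega(\rdt(g))$. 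Let $\gE$ draw, with probability $\tfrac12$, the ``unflipped'' input $x^i\sim\gD_{a_i}$ independently for all $i\in[n]$ (so $(f\circ g)(x)=f(a)$); and with probability $\tfrac12$, pick a sensitive block $B$ with probability $w_B/s$ and draw $x^i\sim\gD_{\overline{a_i}}$ for $i\in B$ and $x^i\sim\gD_{a_i}$ for $i\notin B$ (so $(f\circ g)(x)=f(a^B)=\overline{f(a)}$). Both branches produce $f$-inputs on which $f$ is defined, so $f\circ g$ is defined on $\supp(\gE)$, and $\gE$ is balanced.

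The crux is a \emph{per-block bound}: there is a constant $c_0>0$ such that, for every sensitive block $B$, distinguishing the ``unflipped'' block marginal $\nu_0^B:=\prod_{i\in B}\gD_{a_i}$ from the ``flipped'' marginal $\nu_1^B:=\prod_{i\in B}\gD_{\overline{a_i}}$ requires at least $\tau:=c_0\rdt(g)$ queries (with constant advantage), \emph{independently of $|B|$}. This is exactly where \autoref{thm:main} is used: a block of size $|B|$ hands the algorithm $|B|$ copies of $g$-inputs sharing a common flip status, so one might fear a Shaltiel-type shortcut exploiting an easy part common to these correlated samples --- but the main result rules this out for the chosen $(\gD_0,\gD_1)$. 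To prove it I would split $B$ into its $a_i=0$ and $a_i=1$ coordinates, observe that the all-$\gD_0$ product is a legitimate third distribution, and route any cheap distinguisher through that third distribution to obtain a distinguisher for $\corr(g,\tfrac12\gD_0+\tfrac12\gD_1)$ on at most $|B|$ samples; \autoref{fact:amplify} boosts a constant advantage to error $1/3$, contradicting $\corr(g,\cdot)=\Omega(\rdt(g))$. The statement I would actually record is the robust one: \emph{no} algorithm making fewer than $\tau$ queries inside $B$ --- however adaptive, and however it conditions on queries made outside $B$ --- distinguishes $\nu_0^B$ from $\nu_1^B$ with more than a small constant advantage.

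Now let $T$ be any decision tree for $f\circ g$ of small constant error $\epsilon$ on $\gE$; write $Q_i(y)$ for the number of queries $T$ makes inside the $i$-th copy on input $y$, $q_B(y):=\sum_{i\in B}Q_i(y)$, and $\mathrm{cost}(y):=\sum_i Q_i(y)$. Summing the packing constraints pointwise gives $\sum_B w_B\,q_B(y)=\sum_i Q_i(y)\bigl(\sum_{B\ni i}w_B\bigr)\le\mathrm{cost}(y)$, so with $S(y):=\{B:q_B(y)\ge\tau\}$ and $w(S):=\sum_{B\in S}w_B$ we get $\mathrm{cost}(y)\ge\tau\cdot w(S(y))$. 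It therefore suffices to show $\E_{\mu_0}[w(S)]\ge s/2$, where $\mu_0$ is the unflipped branch: then every such $T$ has $\E_\gE[\mathrm{cost}]\ge\tfrac12\E_{\mu_0}[\mathrm{cost}]\ge\tfrac14\tau s=\Omega(\fbs(f)\rdt(g))$, and since depth is at least expected cost this lower-bounds $\dt_\epsilon(f\circ g,\gE)$. Suppose $\E_{\mu_0}[w(S)]<s/2$. A truncation argument --- cap $T$ at $\tau$ queries inside $B$ and output $f(a)$ if the cap is hit; the per-block bound then forbids the capped tree from telling $\mu_0$ apart from the flip-$B$ branch $\mu_1^B$ --- shows that, for all but an $O(\epsilon)$-weight set of blocks $B$, under $\mu_1^B$ one has $q_B\ge\tau$ with probability $1-O(\epsilon)$. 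Combined with $\E_{\mu_0}[w(S)]<s/2$, an averaging/pigeonhole step produces a single block $B^\star$ for which the event ``$q_{B^\star}\ge\tau$'' has probability at most $3/4$ under $\mu_0$ but at least $1-O(\epsilon)$ under $\mu_1^{B^\star}$; since that event is determined by the at most $\tau$ queries $T$ makes inside $B^\star$, it distinguishes $\nu_0^{B^\star}$ from $\nu_1^{B^\star}$ with constant advantage, contradicting the per-block bound. A standard amplification of the composed-function tree then transfers the bound from the small $\epsilon$ used here to error $1/3$.

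I expect the main obstacle to be the per-block bound, and specifically phrasing it robustly enough to be applied against arbitrary adaptive in-block behavior of $T$ (including behavior conditioned on what $T$ saw in other copies); this is the step that genuinely uses \autoref{thm:main}, replacing the naive --- and false, because of Shaltiel examples --- intuition ``the algorithm must learn $\fbs(f)$ of the values $g(x^i)$, each costing $\rdt(g)$'' with a statement immune to correlated samples. The remaining work --- the per-coordinate relabelling $\gD_{a_i}$ versus $\gD_{\overline{a_i}}$ inside a block, the truncation/amplification juggling between constant advantage and error $1/3$, and checking that all the constants can be chosen consistently --- looks routine.
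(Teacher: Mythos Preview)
Your approach is correct and uses the same essential ingredients as the paper---\autoref{thm:main}, the $\fbs$ packing inequality $\sum_B w_B q_B\le\mathrm{cost}$, and a truncation step---but is organized dually. The paper works by \emph{reduction}: given an optimal algorithm $T$ for $f\circ g$ and an \emph{arbitrary} balanced $\gD$ for $g$, it uses the $\fbs$ weights in a single averaging step to locate one block $B_1$ that $T$ queries only $q_1\le q/\fbs(f)$ times in expectation under $\gD_y$, truncates $T$ to $5q_1$ queries on $B_1$, and observes that the truncated tree solves the \emph{bi-correlated samples} problem for $(g,\gD)$ (distinguishing $\gD_{01}^k$ from $\gD_{10}^k$); this gives $\bicorr(g,\gD)\le O(\rdt(f\circ g)/\fbs(f))$ for every $\gD$, after which one invocation of \autoref{thm:main} together with the equivalence $\corr\simeq\bicorr$ (proved by a hybrid through $\gD_{00}^k$) finishes. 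Your ``per-block bound'' is precisely that $\bicorr$ problem, and your ``route through a third distribution'' is precisely the $\corr\simeq\bicorr$ hybrid, so you are re-deriving both inline rather than isolating them. The paper's route is shorter and yields the cleaner standalone inequality $\bicorr(g,\gD)\le O(\rdt(f\circ g)/\fbs(f))$ for all $\gD$; your route produces an explicit hard distribution $\gE$ for $f\circ g$, at the cost of a more delicate constant chase. One simplification of your argument: the final pigeonhole/contradiction is unnecessary, since once the truncation gives $\Pr_{\mu_1^B}[q_B\ge\tau]\ge 1-O(\epsilon)-\delta$, applying the per-block bound directly to the event $\{q_B\ge\tau\}$ (which is decidable with $<\tau$ in-block queries) yields $\Pr_{\mu_0}[q_B\ge\tau]\ge 1-O(\epsilon)-2\delta$ for every good block, and summing against $w_B$ gives $\E_{\mu_0}[w(S)]\ge s(1-O(\epsilon)-2\delta)$ outright.
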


The previous best comparable composition theorem was $\rdt(f\circ g)\geq\Omega(\bs(f)\rdt(g))$, a proof of which is virtually the same as for the result that $\rdt(\myAnd_n\circ g)\geq\Omega(n\rdt(g))$; see~\cite[\S5.1]{Goos2018}. In fact, we were originally motivated to consider the correlated samples problem when trying to strengthen this composition result from block sensitivity to fractional block sensitivity.

\subsection{Independent work by Ben-David and Blais} \label{sec:ind-work}

In an independent and concurrent work, Ben-David and Blais~\cite{Ben-David2020} have also studied the randomized composition conjecture and ways of circumventing Shaltiel examples via improved minimax theorems. They develop a powerful framework for constructing hard \emph{Shaltiel-free} distributions, which is general enough to apply not only to query complexity but also, for instance, to communication complexity. In particular, their framework is able to give an alternative proof of our main result (\autoref{thm:main}) as well as our $\fbs$-based composition theorem (\autoref{thm:composition}). Their proof techniques involve information theory and analysis; by contrast, our techniques are more elementary and directly tailored to the correlated samples problem (which does not explicitly appear in their work).
\subsection{Roadmap}
\label{sec:roadmap}
We will prove our main theorem (\autoref{thm:main}) in \autoref{sec:booster} and \autoref{sec:bootstrapping}. Before that, we introduce our basic notions regarding decision trees in \autoref{sec:preliminaries}. In \autoref{sec:booster}, we characterize decision trees as \emph{likelihood boosters}, emphasizing that a good query algorithm must make significant progress in terms of boosting the likelihood of one of the outputs (0 or 1) to much higher than the other, and vice versa. This characterization frees us from considering inputs from both $\gD_0$ and $\gD_1$ simultaneously: if an algorithm is certain about the output on $\gD_1$, then it must also make few errors on $\gD_0$. We thus reduce the proof of \autoref{thm:main} to bootstrapping decision trees that can make \emph{overall} progress across multiple samples to a decision tree that makes uniform progress. In \autoref{sec:bootstrapping}, we build such a bootstrapping algorithm and show that it makes satisfactory progress with a careful analysis. Proofs for our two applications are in \autoref{sec:selection} and \autoref{sec:composition}.
\section{Preliminaries}
\label{sec:preliminaries}
Let $f: \Sigma^n \to \{0,1,*\}$ be a partial function for some alphabet $\Sigma$ (typically $\Sigma=\{0,1\}$). Let $\gD_0$, $\gD_1$ be distributions supported on $f^{-1}(0)$, $f^{-1}(1)$ respectively. For each $x\in\Sigma^n$, let $\gD_0(x)$ (resp.\ $\gD_1(x)$) denote the probability mass on $x$ in distribution $\gD_0$ (resp.\ $\gD_1$). For a subset $S\subseteq \Sigma^n$, we define $\gD_b(S) = \sum_{x\in S}\gD_b(x)$ for $b = 0,1$. If $\gD_b(S)>0$, we define the conditional distribution $\gD_b|_S$ by $\gD_b|_S(x) = \frac{\gD_b(x)}{\gD_b(S)}$ when $x\in S$, and $\gD_b|_S(x) = 0$ when $x\notin S$. We define the \emph{likelihood-ratio} of $S$ as $$\lr(S) := \frac{\gD_1(S)}{\gD_0(S)}.$$

Let $T$ be a deterministic decision tree that takes as input a sample $x \in \Sigma^n$ drawn from either $\gD_0$ or $\gD_1$. For every vertex $v$ in $T$, we use $\Input(v)\subseteq \Sigma^n$ to denote the set of strings that can reach $v$, or equivalently, the set of strings that agree with all the queries made so far. Typically, every non-leaf vertex in $T$ corresponds to a query to a certain position in the sample, but we will allow non-leaf vertices $v$ in $T$ that do not make any query, each of them having only a single child $v'$ with $\Input(v') = \Input(v)$. We abuse notation slightly and use $v$ as a shorthand for $\Input(v)$, so we have $\gD_0(v) = \sum_{x\in v}\gD_0(x)$, $\gD_1(v) = \sum_{x\in v}\gD_1(x)$ and $$\lr(v) = \frac{\gD_1(v)}{\gD_0(v)}.$$ Note that the likelihood-ratio $\lr(v)$ is non-negative, but could be zero or infinite. We can eliminate the undefined case ($\gD_0(v) = \gD_1(v)=0$) by trimming the unreachable parts of the decision tree.

Now if the decision tree $T$ takes as input $k$ samples from $\Sigma^n$, it is not hard to see that $\Input(v)$ can be written as a Cartesian product $\Input(v) = \Input_1(v)\times \cdots\times\Input_k(v)$, where $\Input_j(v)\subseteq\Sigma^n$ is the set of strings that agree with all the queries made to the $j$-th sample so far. Again, we abuse notation slightly and use $v_j$ as a shorthand for $\Input_j(v)$, so we will often write $v = v_1\times\cdots\times v_k$. We define the \emph{overall likelihood ratio} of $v$ as the product $$\olr(v) := \lr(v_1)\cdots\lr(v_k) = \frac{\gD_1(v_1)}{\gD_0(v_1)}\cdots\frac{\gD_1(v_k)}{\gD_0(v_k)}.$$

It is often more convenient to consider the logarithm of likelihood ratios. We will use natural logarithm throughout the paper, i.e. $\log(\cdot) = \ln(\cdot)$.
\section{Query Algorithms as Likelihood Boosters}
\label{sec:booster}
Our overarching goal (\autoref{thm:main}) is to construct an efficient deterministic query algorithm that distinguishes $\gD_0$ from $\gD_1$, assuming the existence of one that distinguishes $\gD_0^k$ from $\gD_1^k$. As the starting point, we introduce the notion of \emph{likelihood boosters} as a way of measuring the progress made by a query algorithm $T$ in distinguishing $\gD_0$ from $\gD_1$. The key idea is that, as more queries are being made, the algorithm narrows down the possibilities of the unknown input, driving the likelihood of one of the output (0 or 1) much higher than the other. In fact, we show that $T$ can distinguish $\gD_0$ from $\gD_1$ well if and only if a sample drawn from $\gD_1$ has a high probability of arriving at a leaf of $T$ where most of the remaining possibilities produce output 1. (\autoref{lm:booster->query} and \autoref{lm:query->booster}). 

In the multiple-sample setting, we use the notions of \emph{overall likelihood boosters} and \emph{uniform likelihood boosters}, which have different levels of guarantees, to measure the progress of a query algorithm on simultaneously classifying each of the samples in the input. We show that an efficient query algorithm that distinguishes $\gD_0^k$ from $\gD_1^k$ is an efficient overall likelihood booster (\autoref{cor:corr->overall}). Moreover, we show that an efficient uniform likelihood booster on multiple samples induces an efficient likelihood booster on a single sample (\autoref{lm:uniform->single}), which in turn implies an efficient query algorithm that distinguishes $\gD_0$ from $\gD_1$ (\autoref{lm:booster->query}). These results will enable us to reduce proving \autoref{thm:main} to relating overall likelihood boosters to uniform likelihood boosters, which is the focus of \autoref{sec:bootstrapping} (see \autoref{thm:bootstrapping}). 

We now formally define the three types of likelihood boosters mentioned above: 



\begin{definition}
\label{def:booster}
We say a deterministic decision tree $T$ is a \emph{($\delta, M$)-likelihood booster} for $\gD_0,\gD_1$ if, with probability at least $1-\delta$, an input sample drawn from $\gD_1$ reaches a leaf $\ell$ of $T$ with likelihood ratio $\lr(\ell)\geq M$.
\end{definition}

\begin{definition}
\label{def:overall}
We say a deterministic decision tree $T$ is a \emph{($\delta, M$)-overall likelihood booster} for $\gD_0^k,\gD_1^k$ if, with probability at least $1-\delta$, an input drawn from $\gD_1^k$ consisting of $k$ samples reaches a leaf $\ell$ of $T$ with \emph{overall} likelihood ratio $\olr(\ell)\geq M$.
\end{definition}

\begin{definition}
\label{def:uniform}
We say a deterministic decision tree $T$ is a $(\delta,\varepsilon,M)$-\emph{uniform likelihood booster} for $\gD_0^k$ and $\gD_1^k$ if, with probability at least $1-\delta$, an input $x$ drawn from $\gD_1^k$ consisting of $k$ samples reaches a leaf $\ell = \ell_1\times\cdots\times\ell_k$ of $T$ with the property that at least $(1-\varepsilon)k$ different samples $j\in\{1,\cdots,k\}$ satisfy $\lr(\ell_j)\geq M$.
\end{definition}

Note that the above definitions do not depend on the actual output of the decision tree $T$. We now show in the following two lemmas that likelihood boosters are in some sense equivalent to query algorithms that distinguish $\gD_0$ from $\gD_1$. 


\begin{lemma}
\label{lm:booster->query}
Suppose $T$ is a $(\delta, M)$-likelihood booster for $\gD_0,\gD_1$. Consider the deterministic decision tree $T'$ that makes exactly the same queries as $T$ and accepts if and only if a leaf $\ell$ with $\lr(\ell)\geq M$ is reached. Then $T'$ distinguishes $\gD_0$ from $\gD_1$ with the following guarantees:
\begin{enumerate}
    \item (Completeness) $T'$ accepts $x\sim\gD_1$ with probability at \emph{least} $1-\delta$.
    \item (Soundness) $T'$ accepts $x\sim\gD_0$ with probability at \emph{most} $1/M$.
\end{enumerate}
\end{lemma}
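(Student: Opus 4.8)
The plan is to verify the two guarantees directly from the definitions, the only nontrivial point being the soundness bound, which will follow from a basic counting/averaging argument over the accepting leaves.

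First I would establish completeness. By definition of a $(\delta,M)$-likelihood booster, a sample $x\sim\gD_1$ reaches a leaf $\ell$ with $\lr(\ell)\geq M$ with probability at least $1-\delta$. Since $T'$ makes exactly the same queries as $T$, the sample $x$ traverses the identical path and reaches the identical leaf; and $T'$ accepts precisely when $\lr(\ell)\geq M$. Hence $\Pr_{x\sim\gD_1}[T'\text{ accepts }x]\geq 1-\delta$, which is the completeness claim.

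Next I would handle soundness. Let $L$ denote the set of accepting leaves, i.e.\ leaves $\ell$ with $\lr(\ell)\geq M$. The leaves of $T$ (equivalently $T'$) partition $\Sigma^n$, so $\Pr_{x\sim\gD_0}[T'\text{ accepts }x]=\sum_{\ell\in L}\gD_0(\ell)$ and likewise $\sum_{\ell\in L}\gD_1(\ell)\leq 1$. For each $\ell\in L$ we have $\gD_1(\ell)=\lr(\ell)\,\gD_0(\ell)\geq M\,\gD_0(\ell)$ (this is legitimate even when $\gD_0(\ell)=0$, in which case $\gD_1(\ell)\geq 0$ trivially, and the unreachable case $\gD_0(\ell)=\gD_1(\ell)=0$ has been trimmed as in \autoref{sec:preliminaries}). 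Summing over $\ell\in L$ gives $M\sum_{\ell\in L}\gD_0(\ell)\leq\sum_{\ell\in L}\gD_1(\ell)\leq 1$, hence $\sum_{\ell\in L}\gD_0(\ell)\leq 1/M$, which is exactly the soundness bound.

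I do not anticipate a genuine obstacle here: the statement is essentially a repackaging of the definition of $\lr$ together with the observation that leaves partition the input space. The only points requiring a little care are the handling of leaves with $\gD_0(\ell)=0$ (and the trimming convention that removes the fully-undefined case) and the remark that $T'$ is well-defined, since deciding whether $\lr(\ell)\geq M$ at a leaf $\ell$ depends only on $\Input(\ell)$, which is determined by the queries $T$ made along that path.
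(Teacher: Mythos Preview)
Your proposal is correct and follows essentially the same approach as the paper: completeness is immediate from the definition, and soundness is obtained by summing the inequality $\gD_0(\ell)\leq \frac{1}{M}\gD_1(\ell)$ over all accepting leaves and using $\sum_\ell \gD_1(\ell)\leq 1$. Your treatment is slightly more detailed in handling the edge case $\gD_0(\ell)=0$, but the argument is identical.
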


\begin{proof}
Completeness follows directly from the definition of likelihood booster. To prove soundness, consider the set $U$ of leaves $\ell$ with $\lr(\ell)\geq M$. For all $\ell\in U$, we have $\gD_0(\ell)\leq \frac 1M\gD_1(\ell)$. Therefore, $\sum_{\ell \in U}\gD_0(\ell)\leq \frac 1M\sum_{\ell \in U}\gD_1(\ell)\leq \frac 1M$. This means that a sample from $\gD_0$ reaches leaves in $U$ with probability at most $\frac 1M$, which is exactly the desired soundness.
\end{proof}

\begin{lemma}
\label{lm:query->booster}
Suppose a deterministic decision tree $T$ can distinguish $\gD_0$ from $\gD_1$ with the following guarantees: $T$ accepts $x\sim\gD_0$ with probability at most $\delta_0$, and accepts $x\sim\gD_1$ with probability at least $1-\delta_1$. Then $T$ is a $(M\delta_0 + \delta_1, M)$-likelihood booster for any $M>0$.
\end{lemma}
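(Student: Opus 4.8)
The statement to prove is \autoref{lm:query->booster}: if a deterministic tree $T$ accepts $x\sim\gD_0$ with probability at most $\delta_0$ and accepts $x\sim\gD_1$ with probability at least $1-\delta_1$, then $T$ is a $(M\delta_0+\delta_1, M)$-likelihood booster for any $M>0$. The natural approach is to directly estimate the probability that a sample from $\gD_1$ lands at a leaf with small likelihood ratio, and show this is at most $M\delta_0+\delta_1$. First I would split the accepting leaves of $T$ into two groups according to their likelihood ratio: let $A$ be the set of accepting leaves $\ell$ with $\lr(\ell)\geq M$ (the "good" boosting leaves), and let $B$ be the set of accepting leaves with $\lr(\ell)<M$ (the "bad" accepting leaves). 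The non-accepting leaves form a third group.

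The key observation is that a $\gD_1$-sample fails to reach a leaf in $A$ only if it either lands in $B$ or lands in a rejecting leaf. By the soundness-style hypothesis on $\gD_1$, the total $\gD_1$-mass of rejecting leaves is at most $\delta_1$. For the leaves in $B$, I would bound their $\gD_1$-mass by their $\gD_0$-mass: since $\lr(\ell)<M$ means $\gD_1(\ell)< M\,\gD_0(\ell)$, summing over $\ell\in B$ gives $\gD_1(B)< M\,\gD_0(B)\leq M\,\gD_0(\text{all accepting leaves})\leq M\delta_0$, using that $B$ is a subset of the accepting leaves and the hypothesis that $\gD_0$-acceptance probability is at most $\delta_0$. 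Adding the two contributions, the probability that $x\sim\gD_1$ reaches a leaf with likelihood ratio below $M$ is at most $M\delta_0+\delta_1$, which is exactly the definition (\autoref{def:booster}) of a $(M\delta_0+\delta_1, M)$-likelihood booster.

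There is no real obstacle here — the argument is a one-line case split followed by a Markov-type inequality comparing $\gD_1$- and $\gD_0$-mass on low-likelihood-ratio leaves. The only point requiring a small amount of care is the direction of the inequality defining the "bad" set: we must use the strict/non-strict convention consistently with \autoref{def:booster}, which asks for $\lr(\ell)\geq M$, so the complementary bad event is $\lr(\ell)<M$, and on that event $\gD_1(\ell)\le M\gD_0(\ell)$ holds. One should also note the edge cases where $\gD_0(\ell)=0$ but $\gD_1(\ell)>0$ (so $\lr(\ell)=\infty\geq M$, hence such a leaf is good and causes no trouble) and $\gD_1(\ell)=0$ (contributes nothing); these are harmless. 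I would write the proof in three or four sentences following exactly this structure.
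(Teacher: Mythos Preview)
Your proposal is correct and follows essentially the same argument as the paper: both split the ``bad'' leaves (those with $\lr(\ell)<M$) according to whether $T$ accepts or rejects there, bound the $\gD_1$-mass of rejecting leaves by $\delta_1$, and bound the $\gD_1$-mass of low-ratio accepting leaves by $M$ times their $\gD_0$-mass, which is at most $M\delta_0$. The only cosmetic difference is that the paper first restricts to $\{\lr<M\}$ and then splits by accept/reject, whereas you first split by accept/reject and then isolate the low-ratio accepting leaves; the resulting bound is identical.
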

\begin{proof}
Let $U$ denote the set of leaves $\ell$ with $\lr(\ell)< M$. We can partition $U$ as $U = U_0 \cup U_1$, where $U_1$ corresponds to the leaves at which $T$ accepts. Since $T$ accepts with probability at most $\delta_0$ on $\gD_0$, we have $\sum_{\ell\in U_1}\gD_0(\ell)\leq \delta_0$. Similarly, we have $\sum_{\ell\in U_0}\gD_1(\ell)\leq \delta_1$. Therefore,
\begin{equation*}
    \sum_{\ell\in U}\gD_1(\ell) = \sum_{\ell\in U_0}\gD_1(\ell) + \sum_{\ell\in U_1}\gD_1(\ell)\leq \sum_{\ell\in U_0}\gD_1(\ell) + M\sum_{\ell\in U_1}\gD_0(\ell) = \delta_1 + M\delta_0.
\end{equation*}
In other words, a sample from $\gD_1$ has probability at most $M\delta_0 + \delta_1$ of reaching a leaf in $U$, which means that $T$ is a $(M\delta_0 + \delta_1, M)$-likelihood booster.
\end{proof}

In the multiple-sample setting, if we view the pair $\gD_0^k$ and $\gD_1^k$ as $\gD_0'$ and $\gD_1'$ in the single-sample setting with input length multiplied by $k$, the definition of overall likelihood ratio coincides with the definition of likelihood ratio in the single-sample setting. Therefore, we have the following corollary of \autoref{lm:query->booster}, which essentially shows that an efficient query algorithm for the correlated samples problem is an efficient overall likelihood booster:

\begin{corollary}
\label{cor:corr->overall}
Suppose a deterministic decision tree $T$ can distinguish $\gD_0^k$ from $\gD_1^k$ in that $T$ accepts $x\sim\gD_0^k$ with probability at most $\delta_0$, and $T$ accepts $x\sim\gD_1^k$ with probability at least $1-\delta_1$. Then $T$ is a $(M\delta_0 + \delta_1, M)$-overall likelihood booster for any $M>0$.
\end{corollary}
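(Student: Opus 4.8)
The plan is to reduce this directly to Lemma~\ref{lm:query->booster} by treating the $k$-sample setting as an ordinary single-sample setting over the longer string $\Sigma^{nk}=(\Sigma^n)^k$. Concretely, set $\gD_0'\coloneqq\gD_0^k$ and $\gD_1'\coloneqq\gD_1^k$, viewed as distributions over $\Sigma^{nk}$ supported on $(f^k)^{-1}(0^k)$ and $(f^k)^{-1}(1^k)$ respectively. The hypothesis on $T$ says precisely that $T$ distinguishes $\gD_0'$ from $\gD_1'$ with acceptance probability at most $\delta_0$ on $\gD_0'$ and at least $1-\delta_1$ on $\gD_1'$, so Lemma~\ref{lm:query->booster} applies verbatim and tells us that $T$ is a $(M\delta_0+\delta_1,M)$-likelihood booster for $\gD_0',\gD_1'$ for every $M>0$.

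The one point that needs checking is that, under this identification, the single-sample likelihood ratio of a leaf coincides with the \emph{overall} likelihood ratio of \autoref{def:overall}. As recalled in \autoref{sec:preliminaries}, every leaf $\ell$ of $T$ has $\Input(\ell)=\ell_1\times\cdots\times\ell_k$, and since $\gD_b^k$ is a product distribution we get $\gD_b'(\ell)=\prod_{j=1}^k\gD_b(\ell_j)$ for $b=0,1$. Hence
\[
\lr(\ell)\;=\;\frac{\gD_1'(\ell)}{\gD_0'(\ell)}\;=\;\prod_{j=1}^k\frac{\gD_1(\ell_j)}{\gD_0(\ell_j)}\;=\;\olr(\ell).
\]
So the event ``a sample reaches a leaf with $\lr(\ell)\geq M$'' in the single-sample sense is literally the event ``a sample reaches a leaf with $\olr(\ell)\geq M$'', and the conclusion of Lemma~\ref{lm:query->booster} --- that $x\sim\gD_1'$ lands in such a leaf with probability at least $1-(M\delta_0+\delta_1)$ --- is exactly the statement that $T$ is a $(M\delta_0+\delta_1,M)$-overall likelihood booster for $\gD_0^k,\gD_1^k$ in the sense of \autoref{def:overall}.

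There is essentially no obstacle: all the content is the bookkeeping observation that leaf-sets are Cartesian products and that product distributions multiply across coordinates, together with the fact that Lemma~\ref{lm:query->booster} was stated generally enough (arbitrary alphabet, arbitrary input length) to be invoked with the composite object $(\gD_0^k,\gD_1^k)$ in place of $(\gD_0,\gD_1)$. The only hygiene step is to first trim $T$ of unreachable vertices, as in \autoref{sec:preliminaries}, so that $\gD_0'(\ell)+\gD_1'(\ell)>0$ and $\olr(\ell)$ is well-defined at every leaf.
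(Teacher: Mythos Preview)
Your proposal is correct and follows exactly the paper's own reasoning: the paragraph preceding the corollary explains that viewing $\gD_0^k,\gD_1^k$ as a single-sample pair makes the overall likelihood ratio coincide with the ordinary likelihood ratio, after which Lemma~\ref{lm:query->booster} applies directly. Your write-up simply unpacks that observation with the product-distribution computation and the trimming hygiene step, which the paper leaves implicit.
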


To conclude this section, we show that an efficient uniform likelihood booster in the multiple-sample setting implies an efficient likelihood booster in the single-sample setting. 


\begin{lemma}
\label{lm:uniform->single}
For any $(\delta,\varepsilon,M)$-\emph{uniform likelihood booster} $T$ for $\gD_0^k$ and $\gD_1^k$ and any $C>0$, there is a $(\delta + \varepsilon + \frac 1C,M)$-likelihood booster $T'$ for $\gD_0$ and $\gD_1$ with $\depth(T')\leq C\cdot\frac{\depth(T)}k$.
\end{lemma}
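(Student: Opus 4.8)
The plan is to build $T'$ by having it simulate $T$ on a $k$-tuple of samples, only one of which is the "real" input $x\sim\gD_1$; the other $k-1$ coordinates are filled in by internal randomness drawn from $\gD_1$ in a way that is then derandomized. Concretely, let $j\in[k]$ be a uniformly random index, place the real input in coordinate $j$, and independently sample the other $k-1$ coordinates from $\gD_1$. Run $T$ on this $k$-tuple, but abort (and declare failure) as soon as more than $C\cdot\depth(T)/k$ queries have been made to coordinate $j$. The output of $T'$ is the set $\ell_j$ of strings consistent with the queries made to coordinate~$j$; since $T$ queries only one coordinate per step, $\ell_j$ is indeed a leaf of a decision tree (after the non-querying compression moves allowed in \autoref{sec:preliminaries}), and its depth is at most $C\cdot\depth(T)/k$ by construction. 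Finally, derandomize: average over $j$ and over the auxiliary samples, and fix the best choice; this fixing does not increase the depth, and the depth bound holds pointwise so it survives. One should check that fixing $j$ and the auxiliary randomness still leaves a valid decision tree for $x$ alone — this is standard, since once the auxiliary coordinates are frozen to fixed strings, $T$'s behavior is a deterministic function of the queries to coordinate~$j$.

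The success analysis has three error contributions matching the three terms $\delta+\varepsilon+1/C$. First, with probability $\geq 1-\delta$ over the full $k$-tuple (which is distributed exactly as $\gD_1^k$), the leaf reached by $T$ has $\lr(\ell_{j'})\geq M$ for at least $(1-\varepsilon)k$ indices $j'$. Second, conditioned on that good event, a uniformly random $j$ lands on one of those $\geq(1-\varepsilon)k$ "good" coordinates except with probability $\leq\varepsilon$. Third, we must ensure the abort does not trigger: the total number of queries $T$ makes is at most $\depth(T)$, so the expected number of queries to coordinate $j$ (over the random choice of $j$) is at most $\depth(T)/k$; by Markov's inequality the probability that coordinate $j$ receives more than $C\cdot\depth(T)/k$ queries is at most $1/C$. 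A union bound over these three events gives that, with probability $\geq 1-\delta-\varepsilon-1/C$, the real input $x\sim\gD_1$ reaches a leaf $\ell_j$ of the (pre-derandomization) randomized procedure with $\lr(\ell_j)\geq M$, and this lower-bounds the success of the best fixed choice of $(j,\text{auxiliary samples})$.

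The one subtlety — and the step I expect to be the main obstacle — is making sure the derandomization is over the \emph{right} randomness and that conditioning is handled correctly. The events "$\lr(\ell_j)\geq M$" and "coordinate $j$ gets few queries" are not independent of the auxiliary samples, and the guarantee of $T$ being a uniform booster is a statement about the \emph{unconditioned} distribution $\gD_1^k$; one must be careful that restricting to the real input in coordinate $j$ and auxiliary samples in the rest still produces a sample genuinely distributed as $\gD_1^k$ (it does, by independence), so that $T$'s guarantee applies verbatim. After averaging, the success probability of the best deterministic choice is at least the average success probability, which is $\geq 1-\delta-\varepsilon-1/C$; and since the completeness guarantee of a likelihood booster only concerns inputs from $\gD_1$, this is exactly what is needed. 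The rest is bookkeeping: confirming that the Markov bound on query count translates to a pointwise depth bound after fixing $j$ (it does, because we hard-abort rather than merely bounding in expectation), and that the compressed object is a legitimate decision tree in the sense of \autoref{sec:preliminaries}.
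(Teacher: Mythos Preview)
Your proposal is correct and follows essentially the same approach as the paper's proof: embed the real input at a uniformly random coordinate $j$, fill the remaining coordinates with fresh $\gD_1$-samples, simulate $T$ with a hard abort after $C\cdot\depth(T)/k$ queries to coordinate $j$, and derandomize by fixing $j$ and the auxiliary samples. The paper's analysis is organized identically, including the Markov bound for the $1/C$ term and the observation (which you phrase as ``$j$ is independent of the leaf'') that one may equivalently choose $j$ \emph{after} the leaf of $T$ is determined to get the $\varepsilon$ term.
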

\begin{proof}
Define $Q = C\cdot\frac{\depth(T)}k$. We first build a randomized query algorithm $\gA'$ for $\gD_0$ and $\gD_1$, and later derandomize it as $T'$. On input $x_{\gA'}$, $\gA'$ generates $k$ random samples $(x_1, \ldots, x_k) \sim \gD_1^k$, selects a uniformly random index $j$, replaces $x_j$ with $\gA'$'s own input $x_{\gA'}$, and finally simulates $T$ on the modified $k$ samples $(x_1, \ldots, x_{\gA'}, \ldots, x_k)$. If $T$ attempts to make the $(\lfloor Q\rfloor + 1)$-th query to the $j$-th (modified) sample, $\gA'$ halts. 

It is easy to see that the maximum number of queries made by $\gA'$ is at most $Q$. Moreover, by Markov's inequality, if the input $x_{\gA'}$ to $\gA'$ is drawn from $\gD_1$, the probability that $\gA'$ halts early because of $T$ making more than $Q$ queries to the $j$-th sample is at most $\frac 1C$, since the average number of queries $T$ makes to the $j$-th sample for a uniformly random $j$ is at most $\frac{\depth(T)}{k}$.

We now show that with probability at least $1-(\delta + \varepsilon + \frac 1C)$, $\gA'$ reaches a leaf $\ell = \ell_1\times\cdots\times\ell_k$ of $T$ with $\lr(\ell_j)\geq M$ when its own input $x_{\gA'}$ is drawn from $\gD_1$. By a union bound, we only need to show that this holds with probability at least $1-(\delta + \varepsilon)$ for the extended version of $\gA'$ that doesn't halt early. If we switch the order of randomness so that $j$ is chosen after a leaf of $T$ is reached, this follows easily from the definition of uniform likelihood boosters (\autoref{def:uniform}).

Finally, we derandomize $\gA'$. Note that the randomness in $\gA'$ only comes from the randomness in $j$ and in all the generated samples $x_{i}$ except the $j$-th sample. We can simply fix them so that the probability of reaching a leaf $\ell$ of $T$ with $\lr(\ell_j) \geq M$ is maximized, assuming that the $j$-th sample is from $\gD_1$. Since $j$ and all generated samples other than the $j$-th sample have been fixed, the decision tree $T$ now ``shrinks'' to a decision tree $T'$ with only the first $\lfloor Q\rfloor$ queries to the $j$-th sample remaining, and every leaf $\ell$ of $T$ that is reachable when we run $\gA'$ now becomes a leaf $\ell'$ of $T'$. Shrinking the tree doesn't affect the computation history regarding the $j$-th sample, so we have $\ell' = \Input(\ell') = \Input_j(\ell) = \ell_j$ and $\lr(\ell') = \lr(\ell_j)$. This proves that $T'$ is a ($\delta + \varepsilon + \frac 1C,M$)-likelihood booster.
\end{proof}

\allowdisplaybreaks
\section{Bootstrapping Overall Booster to Uniform Booster}
\label{sec:bootstrapping}

The results from the previous section (\autoref{sec:booster}) reduce proving our main result (\autoref{thm:main}) to proving relations between overall likelihood boosters and uniform likelihood boosters. In this section, we complete this step with the following result:


\begin{theorem}
\label{thm:bootstrapping}
Assume that there is a depth-$L$ $(0.1, 25)$-overall likelihood booster for \emph{every} distribution pair $\gD_0^k,\gD_1^k$. Then there is a depth-$O(KL)$ $(0.1, 0.1, 100)$-uniform likelihood booster for every distribution pair $\gD_0^K,\gD_1^K$ whenever $K \geq 1000 k(|\Sigma| + 1)^n$.
\end{theorem}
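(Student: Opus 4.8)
The plan is to realize the claimed uniform booster as a concatenation of $R=\Theta(k)$ \emph{rounds}. A round handles every sample that is not yet \emph{good} — one whose current likelihood ratio is below $100$ — by grouping such samples into batches of exactly $k$, running on each batch the hypothesized depth‑$L$ $(0.1,25)$‑overall booster, and permanently \emph{freezing} any sample the instant its likelihood ratio reaches $100$ (a frozen sample is never queried again). The device that makes the batching legal is the \emph{state} of a sample: the subcube of $\Sigma^n$ consistent with the queries made to that sample so far. There are at most $S:=(|\Sigma|+1)^n$ states, and conditioned on the full computation history a collection of samples currently sharing a common state $v$ is distributed exactly as $(\gD_0|_v)^k$ or $(\gD_1|_v)^k$ — independence of the original samples survives conditioning on disjoint sets of queries — so the hypothesis applies verbatim to the valid pair $(\gD_0|_v,\gD_1|_v)$ and supplies an overall booster for any size‑$k$ batch drawn from one state. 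Hence in a round I bucket the not‑good samples by state, chop each bucket into size‑$k$ batches, and run one overall booster per batch. A round issues at most $K/k$ booster calls, so the resulting tree has depth $O(R\cdot (K/k)\cdot L)=O(KL)$; and since $K\ge 1000kS$, the fewer‑than‑$k$ samples left over in each of the $\le S$ buckets total at most $K/1000$, so as long as at least $0.1K$ samples are still active a round performs $\Omega(K/k)$ batch calls.

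Progress will be tracked by a capped log‑likelihood potential together with a bonus for frozen samples:
\[
\Psi ~:=~ \sum_{j:\, v_j \text{ active}} \max\bigl\{\log\lr(v_j),\, -T\bigr\} \;+\; 2(\log 100)\cdot(\text{number of frozen samples}),
\]
where $T:=\Theta(\log(RK))$; the active terms are capped above by $\log 100$ (by definition of ``active'') and below by $-T$, so $\Psi\le 3K\log 100$ always, $\Psi$ starts at $0$, and $\Psi$ changes by a bounded amount at each call. Two standard facts tame the randomness of a single call. First, the soundness computation behind \autoref{lm:booster->query} gives $\gD_1(\{\ell:\lr(\ell)\le\eps\})\le\eps$; applied coordinatewise and union‑bounded over the $R$ rounds and $K$ samples, this lets me condition on the event that no sample's likelihood ratio ever falls below $e^{-T}$, at a cost of at most $0.01$ in failure probability, so the floor at $-T$ in $\Psi$ is never actually binding. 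Second, the calls within one round act on pairwise disjoint samples, so conditioned on the history at the start of the round their success events (``the reached leaf has $\olr\ge 25$'') are mutually independent, each of probability $\ge 0.9$; a Chernoff bound then makes at least $0.8m$ out of $m$ calls in a round succeed, except with probability $e^{-\Omega(m)}=e^{-\Omega(K/k)}$.

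The heart of the matter is a per‑round progress bound: while at least $0.1K$ samples are active, a round makes $m=\Omega(K/k)$ calls, and with high probability it increases $\Psi$ by $\Omega(m)=\Omega(K/k)$. In a successful batch the defining inequality $\olr(\ell)\ge 25$ says that the batch's aggregate (uncapped) log‑likelihood ratio grows by at least $\log 25$; the effect on $\Psi$ is that amount, plus $\log 100$ for each sample that crosses $100$ and gets frozen, minus the overshoot beyond $\log 100$ of any such sample, minus any net drop of the samples in the batch that stay active. The overshoot is \emph{not} a real loss: in the formula for $\Psi$ a frozen sample contributes a fixed $2\log 100$ and its past trajectory is forgotten, so the change attributable to a newly frozen sample is $2\log 100-\log\lr(\text{its state before the round})\ge\log 100>0$ regardless of how far it overshot. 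What genuinely requires care is that a batch may realize its guaranteed $\log 25$ of progress uncooperatively — by dumping a large boost onto a near‑threshold sample (wasted against the cap) while other samples in the batch drift downward. I would control this by being selective about which not‑good samples are grouped together (for instance, processing the active samples in $O(1)$ logarithmic \emph{bands} and only batching samples from a common band, so that the guaranteed $\log 25$ of batch progress is forced to register as capped progress), combined with an expectation‑level accounting over the whole tree in which the Markov‑type lower tail above bounds the rare downward contributions of the unsuccessful batches. Summing the resulting per‑batch bounds over the $\Omega(K/k)$ calls of the round yields the claimed increase.

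It remains to assemble the pieces. Since $\Psi$ is bounded above by $3K\log 100$ while the active count is non‑increasing (frozen samples never thaw), if $0.1K$ or more samples were still active at the very end, then every one of the $R$ rounds started with $\ge 0.1K$ active samples and hence increased $\Psi$ by $\Omega(K/k)$; choosing $R=\Theta(k)$ large enough forces the total increase to exceed $3K\log 100$, a contradiction. So the final leaf has fewer than $0.1K$ active samples, i.e. at least $0.9K$ samples with likelihood ratio at least $100$, which is exactly a $(0.1,0.1,100)$‑uniform likelihood booster of depth $O(KL)$; a union bound over the $O(R)$ failure events (each of probability $\le 0.01$ or $e^{-\Omega(K/k)}$) keeps the overall failure probability below $0.1$. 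I expect the per‑round progress bound of the third paragraph to be the real obstacle: an overall booster is entitled to deliver its guaranteed $\log 25$ of progress in whatever way is least useful to us, and squaring that with the cap at $100$ — as well as with the rare large fluctuations that the likelihood ratio can undergo — is precisely what forces the banded grouping and the careful charging argument; the remaining ingredients (counting states, independence within a round, the soundness/Markov bound, the Chernoff step, and the final arithmetic) are routine.
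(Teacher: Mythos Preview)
Your high-level architecture (bucket unsettled samples by state, run the hypothesized booster on size-$k$ batches, iterate) is exactly the one the paper uses, and the routine ingredients you list---pigeonhole on states, independence across disjoint batches, the $\Pr_{\gD_1}[\lr\le\eps]\le\eps$ tail, the final count---are all fine. The gap is precisely where you say it is: the per-batch progress bound. Your proposed fixes do not close it.

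The difficulty is the \emph{overshoot}. Write the change in your potential over one batch as
\[
\Delta\Psi \;=\; \sum_j \Delta\log\lr_j \;+\; \sum_{j\ \text{newly frozen}}\bigl(2\log 100 - \log\lr_j^{\text{end}}\bigr).
\]
The first sum has nonnegative expectation under $\gD_1$ and is $\ge\log 25$ on a successful batch; the second sum is the overshoot penalty, and nothing in your setup bounds it. A single query can send one coordinate's $\log\lr$ from just below $\log 100$ to an arbitrarily large value (even $+\infty$), and the booster is free to compensate by driving the remaining $k-1$ coordinates down by the same aggregate amount while still meeting its $\olr\ge 25$ promise. Thus a ``successful'' batch can have $\Delta\Psi$ arbitrarily negative, and this persists in expectation. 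Banding the \emph{starting} log-ratios does nothing about the \emph{ending} overshoot; freezing ``the instant'' a sample crosses $100$ does not help either, because a single query can already overshoot by an uncontrolled amount, and halting the booster mid-run forfeits the $\olr\ge 25$ guarantee for the remaining coordinates. Finally, even granting your per-batch claim, the unsuccessful $0.2m$ batches can each drag $\Psi$ down by $\Theta(kT)$, which swamps the $0.8m\cdot\log 25$ you hoped to collect.

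What the paper does instead is abandon the batch-level accounting and prove a \emph{per-query} inequality for a different potential: set $\tllr(v_j)=\log\lr(v_j)$ when $|\log\lr(v_j)|\le 100$ and $\tllr(v_j)=500$ otherwise (symmetric truncation, large jump), and show that under $\gD_1$ every single query satisfies $\E[\Delta\,\tllr]\ge 0.001\,\E[(\Delta\,\tllr)^2]$. This second-moment-to-first-moment control is the missing idea: it makes the overshoot harmless (crossing either threshold contributes a fixed $+400$ or more to $\tllr$ regardless of how far the raw $\log\lr$ went), and it lets one argue by contradiction that a phase with small expected $\otllr$ gain must also have small variance, which is incompatible with the booster's guarantee of a constant-probability jump of size $\ge\log 25$. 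Your Chernoff-over-batches approach has no substitute for this variance control, and I do not see how to supply one without essentially reinventing the paper's per-step inequality (their \autoref{lm:submartingale} and the helper estimate $t\ln t-(t-1)\ge\frac{1}{M+2}t\ln^2 t$).
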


We first show how to derive \autoref{thm:main} from \autoref{thm:bootstrapping}:

\begin{proof}[Proof of \autoref{thm:main}]
We prove the inequality $\rdt(f) \leq O(\corr(f))$ (the converse inequality is trivial). Suppose we have a depth-$L$ deterministic decision tree that solves the correlated samples problem on $\frac 12\gD_0^k + \frac 12\gD_1^k$ with success probability at least $0.999$ (recall that the success probability can be amplified by \autoref{fact:amplify}). That is, the decision tree accepts inputs drawn from $\gD_1^k$ with probability at \emph{least} $0.998$ and accepts inputs drawn from $\gD_0^k$ with probability at \emph{most} $0.002$. By \autoref{cor:corr->overall}, it is a $(0.1, 25)$-overall likelihood booster for $\gD_0^k$ and $\gD_1^k$.

By \autoref{thm:bootstrapping}, for any pair of distributions $\gD_0^K, \gD_1^K$, there is a $(0.1, 0.1, 100)$-uniform likelihood booster with depth $O(KL)$. Then by \autoref{lm:uniform->single}, there is a $(1/3, 100)$-likelihood booster with depth $O(L)$ for $\gD_0$ and $\gD_1$, which by \autoref{lm:booster->query} implies a query algorithm for $\gD_0$ and $\gD_1$ with success probability at least $1/3$. By the arbitrariness of $\gD_0$ and $\gD_1$, we have $\rdt_{1/3}(f) = O(L)$ via Yao's minimax, as desired.
\end{proof}

The rest of this section is dedicated to proving \autoref{thm:bootstrapping}. We construct the desired uniform likelihood booster $T_\bootstrap$, described in \autoref{subsec:bootstrapping}, by applying different overall likelihood boosters to appropriate sets of samples at different phases of computation. To quantify the progress made by $T_\bootstrap$, we design a measure based on a “truncated” log likelihood ratio which handles samples that $T_\bootstrap$ is confident about with special care. As the technical core of the proof, we show that under our carefully constructed measure, $T_\bootstrap$ in expectation makes \emph{positive} and \emph{constant} progress during each phase of computation (Lemmas \ref{lm:submartingale} and \ref{lm:progress}). Therefore, $T_\bootstrap$ is able to achieve the desired guarantees after sufficiently many phases.   

\subsection{Bootstrapping algorithm}
\label{subsec:bootstrapping}
We describe our depth-$O(KL)$ $(0.1, 0.1, 100)$-uniform likelihood booster $T_\bootstrap$ taking $K \geq 1000k(|\Sigma| + 1)^n$ samples.
%
%
Recall that each vertex $v$ of $T_\bootstrap$ can be written as a Cartesian product $v = v_1\times \cdots\times v_K$, where $v_j\subseteq \Sigma^n$ is the set of strings that are consistent with the queries made to the $j$-th sample so far. We say that the $j$-th sample is \emph{settled} at $v$ if $$\lr(v_j) = \frac{\gD_1(v_j)}{\gD_0(v_j)}\notin[e^{-100}, e^{100}].$$ Note that it is possible for a sample to be settled in the wrong direction (e.g.\ $\lr(v_j)< e^{-100}$ on input drawn from $\gD_1^K$), but we will show that this is not a serious issue.

The query algorithm $T_\bootstrap$ proceeds in at most $C \cdot K$ phases (for some large constant $C>0$). Each phase consists of at most $L$ queries and is described as follows:\\\\
\textbf{Phase $s = 1,\cdots,C\cdot K$:}
\begin{enumerate}
\item If fewer than $k  (|\Sigma| + 1)^n$ out of the $K$ samples are unsettled, halt.
\item Else, since each $v_j$ is determined by a string $v_*$ in $(\Sigma\cup\{*\})^n$ recording the queries made so far to the $j$-th sample, by the Pigeonhole Principle there exist $k$ unsettled samples $j_1,\cdots,j_k$ with $v_{j_1} = \cdots = v_{j_k} = v_*$.

\item Run the depth-$L$ $(0.1, 25)$-overall likelihood booster $A^{(v_*)}$, assumed in \autoref{thm:bootstrapping} to exist,  relative to the input-distribution pair
\[   (\mc{D}_0|_{v_*})^k \ ,  \quad    (\mc{D}_1|_{v_*})^k \  \]
on the samples
\[  (x^{j_1}, \ldots, x^{j_k}) \ .   \]
If any query causes one of these samples to become settled (i.e. $\lr(v_{j_i})\notin[e^{-100}, e^{100}]$ for some $i\in\{1,\cdots,k\}$), halt $A^{(v_*)}$ and go to the next Phase. Otherwise we proceed to the next Phase after $A^{(v_*)}$ terminates. If fewer than $L$ queries are made in the current phase, insert dummy vertices that do not make any query (see \autoref{sec:preliminaries}) to $T_\bootstrap$ so that each phase corresponds to a  path in $T_\bootstrap$ with length exactly $L$.

\end{enumerate}
\subsection{Sub-martingale property of progress measure}
\label{subsec:submartingale}
It's not hard to see that the overall likelihood ratio ($\olr$) is \emph{not} an effective measure of progress for $T_\bootstrap$: $\olr$ can rocket to infinity even when there is only one settled sample. In this subsection, we introduce a better progress measure: \emph{overall truncated log likelihood ratio} ($\otllr$), and show that it is a sub-martingale along the computation path of \emph{any} decision tree (\autoref{lm:submartingale}). In other words, $T_{\bootstrap}$ always makes \emph{non-negative} progress in expectation. We will show that each phase of $T_\bootstrap$ makes \emph{positive} expected progress in the next subsection (\autoref{subsec:positive_progress}).

Let $T$ be a deterministic decision tree that takes as input $K$ samples. For every vertex $v = v_1\times\cdots\times v_K$ of $T$, we define the \emph{truncated log likelihood ratio} of $v_j$ as
\begin{equation*}
\tllr(v_j) := \left\{\begin{array}{ll} \log(\lr(v_j)),& \textup{if }|\log(\lr(v_j))|\leq 100, \\ 500, & \textup{otherwise.}\end{array}\right.
\end{equation*}
Note that if $\log(\lr(v_j))$ slightly exceeds the upper threshold 100, we set $\tllr$ to a much higher value 500. Also, when $\log(\lr(v_j))$ drops below the lower threshold -100, we also set $\tllr$ to 500. Thus, the $j$-th sample is \emph{settled} at $v$ if and only if $\tllr(v_j) = 500$.

We define the \emph{overall truncated log-likelihood-ratio} of $v$ as the sum
\[
\otllr(v) := \sum_{j=1}^K\tllr(v_j).
\]

The input $x$ to $T$ determines a computation path from the root of $T$ to a leaf: $v^0\rightarrow v^1\rightarrow \cdots\rightarrow v^q$. The randomness in $x$ transfers to the randomness in the path, so the path is a stochastic process. We now show that $\otllr(v^t)$ along the path is a sub-martingale when $x$ is drawn from $\gD_1^K$:
\begin{lemma}
\label{lm:submartingale}
Assume that $T$ never queries a settled sample. Assume that the input $x$ to $T$ is drawn from $\gD_1^K$, $v$ is a non-leaf vertex with distance $t$ from the root, and $v$ is reachable (i.e. $\Pr[v^t = v]>0$ on $\gD_1^K$). Define $\Delta^t := \otllr(v^{t+1}) - \otllr(v^t)$. Then we have
\begin{equation*}
\bb E[\Delta^t|v^t = v] \geq 0.001\cdot \bb E[(\Delta^t)^2|v^t = v]\geq 0.
\end{equation*}
\end{lemma}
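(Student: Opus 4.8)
The plan is to analyze the single query $T$ makes at $v$ and show that the resulting change $\Delta^t$ in $\otllr$ has the claimed relation between its mean and its second moment. Say the query at $v$ is made to the $j$-th sample (if $v$ is a dummy vertex making no query, $\Delta^t = 0$ identically and there is nothing to prove). Since $T$ never queries a settled sample, $v_j$ is unsettled, i.e.\ $\log(\lr(v_j)) \in [-100,100]$, so $\tllr(v_j) = \log(\lr(v_j))$. The query splits $v_j$ into children $\{v_j^{(\sigma)}\}_{\sigma}$ indexed by the queried coordinate's value $\sigma \in \Sigma$; only the $j$-th factor changes, so $\Delta^t = \tllr(v_j^{(\sigma)}) - \tllr(v_j)$ where $\sigma$ is the realized value. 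The key point is to condition correctly: since $x \sim \gD_1^K$ and we condition on reaching $v$, the child $v_j^{(\sigma)}$ is reached with probability $\gD_1(v_j^{(\sigma)})/\gD_1(v_j)$. Writing $r = \lr(v_j)$ and $r_\sigma = \lr(v_j^{(\sigma)})$, one checks the identity $\sum_\sigma \frac{\gD_1(v_j^{(\sigma)})}{\gD_1(v_j)} \cdot \frac{r}{r_\sigma} = \sum_\sigma \frac{\gD_0(v_j^{(\sigma)})}{\gD_0(v_j)} = 1$. In other words, if we let $p_\sigma := \gD_1(v_j^{(\sigma)})/\gD_1(v_j)$ be the conditional probabilities and $Y_\sigma := \log r_\sigma - \log r$ the log-likelihood increments (before truncation), then $\E_{\gD_1}[e^{-Y}] = \sum_\sigma p_\sigma e^{-Y_\sigma} = 1$ where $Y$ is the realized increment.

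Granting this, the untruncated increment $Y$ is a random variable (under the conditional law) with $\E[e^{-Y}] = 1$. A standard convexity fact then gives $\E[Y] \geq 0$: by Jensen, $e^{-\E[Y]} \leq \E[e^{-Y}] = 1$. We need the quantitative strengthening $\E[Y] \geq c\,\E[Y^2]$ for the untruncated quantity, which holds because $\E[e^{-Y}]=1$ forces $Y$ to be suitably bounded below in a soft sense; concretely, using the inequality $e^{-y} \geq 1 - y + \tfrac{1}{2}y^2 \cdot \phi(y)$ for an appropriate $\phi$, or more simply bounding $|Y|$: since the parent is unsettled and after one query each child either stays unsettled ($|\log r_\sigma| \le 100$) or becomes settled, $Y$ ranges in a bounded interval — roughly $[-200, 200]$ for the "stays in range" part, but when a child becomes settled the true $\log r_\sigma$ could be anything. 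This is precisely where truncation rescues us: when a child becomes settled, $\tllr$ is set to $500$, so the \emph{truncated} increment is $\Delta^t = 500 - \tllr(v_j) = 500 - \log r \geq 400 > 0$. So the truncated increment $\Delta^t$ equals $Y$ when the child stays unsettled and equals something $\geq 400$ (a large positive constant) when the child becomes settled — in both cases $\Delta^t$ is bounded above by a constant (the "stays" case has $|\Delta^t| \le 200$, the "settles" case has $\Delta^t \le 600$) and bounded below by $-200$.

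The main obstacle, and where care is needed, is reconciling the truncation with the exponential identity: the identity $\E[e^{-Y}]=1$ is about the \emph{untruncated} $Y$, but we want a statement about the truncated $\Delta^t$. The resolution is that truncation only \emph{increases} the value (replacing $\log r_\sigma \le -100$ or $\ge 100$ by $500$), i.e.\ $\Delta^t \ge Y$ pointwise; hence $\E[\Delta^t] \ge \E[Y] \ge 0$ already. For the second-moment bound one argues: on the "settled" event the contribution to $\E[\Delta^t]$ is a positive constant times the probability of that event, while the contribution to $\E[(\Delta^t)^2]$ is a (larger but still) constant times the same probability, so these terms satisfy the desired inequality with room to spare; on the "unsettled" event $\Delta^t = Y$ with $|Y| \le 200$, and for bounded $Y$ with $\E[e^{-Y}]=1$ the inequality $\E[Y] \ge c\,\E[Y^2]$ follows from the elementary bound $e^{-y} \le 1 - y + C y^2$ valid for $|y|\le 200$ (with $C = C(200)$ an absolute constant), giving $0 = \E[e^{-Y}] - 1 \le -\E[Y] + C\,\E[Y^2]$, hence $\E[Y] \le C\,\E[Y^2]$ — wait, this is the wrong direction, so instead one uses $e^{-y} \ge 1 - y + c y^2$ for $|y| \le 200$ with $c > 0$ absolute, yielding $1 = \E[e^{-Y}] \ge 1 - \E[Y] + c\,\E[Y^2]$, i.e.\ $\E[Y] \ge c\,\E[Y^2]$. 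Choosing the constant $0.001$ conservatively to absorb both cases (and the fact that the two events must be combined with a union/weighting argument) completes the proof. The only genuinely delicate bookkeeping is tracking that $\E[e^{-Y}]=1$ survives the conditioning on reaching $v$ and that the per-coordinate analysis is unaffected by the other $K-1$ samples (which it is, since $\gD_1^K$ is a product and only the $j$-th factor is touched).
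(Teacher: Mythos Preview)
Your approach is correct and is essentially the paper's proof in different packaging: the paper's helper inequality $t\ln t - (t-1) \geq \tfrac{1}{M+2}\,t\ln^2 t$ becomes, upon substituting $t=e^y$, exactly your inequality $e^{-y} \geq 1 - y + cy^2$ (with $c=\tfrac{1}{M+2}$), and your martingale identity $\E[e^{-Y}]=1$ is just the paper's use of $\sum_\sigma p_0(\sigma)=1$ rewritten via $p_1(\sigma)e^{-Y_\sigma}=p_0(\sigma)$. The paper then carries out precisely the settled/unsettled case split you sketch, arriving at $\E[W]\geq 399\sum_{\sigma\in H}p_1(\sigma) + \tfrac{1}{400}\sum_{\sigma\notin H}p_1(\sigma)W(\sigma)^2$ and bounding each piece by $\tfrac{1}{1000}$ times the corresponding second-moment contribution.

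One correction worth flagging: your claim that ``$\Delta^t \geq Y$ pointwise'' is false. A single query can send $\log(\lr)$ from $[-100,100]$ to something far above $500$ (even to $+\infty$), in which case truncation to $500$ \emph{decreases} the value and $\Delta^t < Y$. This does not damage your argument, since you never actually use the pointwise claim in the second-moment analysis; the combining step works because the ``leakage'' term from applying $e^{-y}\geq 1-y+cy^2$ only on the unsettled outcomes is $\sum_{\sigma\in H}p_0(\sigma)-\sum_{\sigma\in H}p_1(\sigma)\geq -\Pr_{p_1}[H]$, and the settled-event contribution of at least $400\cdot\Pr_{p_1}[H]$ absorbs it with $399\cdot\Pr_{p_1}[H]$ to spare --- exactly the $399$ that appears in the paper's calculation.
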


\begin{proof}
Let us condition on $v^t = v$ in the whole proof. If $v$ is a dummy vertex that does not make any query, then $\Delta^t = 0$ deterministically and the lemma holds trivially. We assume that $v$ is not a dummy vertex henceforth.

Suppose sample $j$ is queried at vertex $v$. We have $\otllr(v^{t+1}) - \otllr(v^t) = \tllr(v^{t+1}_j) - \tllr(v^{t}_j)$. Since $T$ never queries a settled sample, we know $\tllr(v^t_j) = \log \frac{\gD_1(v^t_j)}{\gD_0(v^t_j)}\in [-100, 100]$.

Let $\sigma\in\Sigma$ denote the random outcome of the query, and let $p_0(\sigma), p_1(\sigma)$ denote the probability that the outcome to the query is $\sigma$ under $\gD_0|_{v_j^t}, \gD_1|_{v_j^t}$, respectively. Let $H\subseteq \Sigma$ denote the set of $\sigma\in \Sigma$ with $|\tllr(v^{t}_j) + \log\frac{p_1(\sigma)}{p_0(\sigma)}|>100$. Note that $\gD_0(v^{t+1}_j) = \gD_0(v^t_j)p_0(\sigma)$ and $\gD_1(v^{t+1}_j) = \gD_1(v^t_j)p_1(\sigma)$, so
\begin{equation*}
\tllr(v^{t+1}_j) = \left\{\begin{array}{ll}
\tllr(v^{t}_j) + \log\frac{p_1(\sigma)}{p_0(\sigma)}, & \sigma\notin H, \\
 500, & \sigma\in H.
\end{array}\right.
\end{equation*}
Thus, $H$ is precisely the set of outcomes $\sigma \in \Sigma$ that make sample $j$ settled at $v^{t+1}$. Let $W = W(\sigma)$ denote the difference $\tllr(v^{t+1}_j) - \tllr(v^{t}_j)$. Our goal is to prove $\bb E[W]\geq 0.001\cdot \bb E[W^2]$.

%
Note that $W(\sigma)\in [400, 600]$ when $\sigma \in H$ and $W(\sigma) = \log\frac{p_1(\sigma)}{p_0(\sigma)}\in [-200,200]$ when $\sigma\notin H$.
We have
\begin{align}
\bb E[W] \geq & 400 \sum_{\sigma\in H}p_1(\sigma) + \sum_{\sigma\notin H}p_1(\sigma)\log\frac{p_1(\sigma)}{p_0(\sigma)}\nonumber \\
= & 400 \sum_{\sigma\in H}p_1(\sigma) + \sum_{\sigma\notin H}p_0(\sigma)\cdot \frac{p_1(\sigma)}{p_0(\sigma)}\log\frac{p_1(\sigma)}{p_0(\sigma)}.\label{eq:submartingale}
\end{align}
By a helper lemma (\autoref{lm:helper}) proved in \autoref{sec:helper}, we know that $$\frac{p_1(\sigma)}{p_0(\sigma)}\log\frac{p_1(\sigma)}{p_0(\sigma)}\geq \left(\frac{p_1(\sigma)}{p_0(\sigma)} - 1\right) + \frac 1{400}\cdot \frac{p_1(\sigma)}{p_0(\sigma)}\left(\log\frac{p_1(\sigma)}{p_0(\sigma)}\right)^2.$$ Plugging this into (\ref{eq:submartingale}), we have
\begin{align*}
\bb E[W] \geq & 400 \sum_{\sigma\in H}p_1(\sigma) + \sum_{\sigma\notin H}p_1(\sigma) - \sum_{\sigma\notin H}p_0(\sigma) + \frac 1{400}\sum_{\sigma\notin H}p_1(\sigma)\left(\log\frac{p_1(\sigma)}{p_0(\sigma)}\right)^2\\
\geq & 400 \sum_{\sigma\in H}p_1(\sigma) +
\left(\sum_{\sigma\notin H}p_1(\sigma) - 1\right) + \frac 1{400}\sum_{\sigma\notin H}p_1(\sigma)\left(\log\frac{p_1(\sigma)}{p_0(\sigma)}\right)^2\\
= & 400 \sum_{\sigma\in H}p_1(\sigma) -
\sum_{\sigma\in H}p_1(\sigma) + \frac 1{400}\sum_{\sigma\notin H}p_1(\sigma)\left(\log\frac{p_1(\sigma)}{p_0(\sigma)}\right)^2\\
= & 399 \sum_{\sigma\in H}p_1(\sigma) + \frac 1{400}\sum_{\sigma\notin H}p_1(\sigma)\left(\log\frac{p_1(\sigma)}{p_0(\sigma)}\right)^2\\
= & 399 \sum_{\sigma\in H}p_1(\sigma) + \frac 1{400}\sum_{\sigma\notin H}p_1(\sigma)(W(\sigma))^2\\
\geq & \frac {1}{1000}\sum_{\sigma\in H}p_1(\sigma)(W(\sigma))^2+ \frac 1{400}\sum_{\sigma\notin H}p_1(\sigma)(W(\sigma))^2\\
\geq & \frac 1{1000}\bb E[W^2].
\end{align*}
\end{proof}

\subsection{Bounding the conditional expectation of progress}
\label{subsec:positive_progress}
In the previous subsection, we showed that $\otllr$, as a progress measure, is a sub-martingale. Now we refine our progress measure to also include the natural measure \emph{number of settled samples}, and show that each phase of $T_\bootstrap$ makes \emph{positive} progress in expectation.

Recall that we inserted dummy vertices in $T_\bootstrap$ to ensure that each phase corresponds to a computation path of length exactly $L$. Therefore, an entire computation path of $T_\bootstrap$ must have length divisible by $L$: $v^0\rightarrow \cdots \rightarrow v^{qL}$. The sub-path $v^{tL}\rightarrow \cdots\rightarrow v^{(t+1)L}$ is the computation path of the $(t+1)$-th phase.


Define $\settled(v)$ as the number of settled samples at vertex $v$. Our new measure of progress is $$\progress(v^t) := \settled(v^t) + \otllr(v^t).$$ 

\begin{lemma}
\label{lm:progress}
Assume that the input $x$ to $T_\bootstrap$ is drawn from $\gD_1^K$, $v$ is a non-leaf vertex with distance $tL$ from the root, and $v$ is reachable (i.e. $\Pr[v^{tL} = v]>0$ on $\gD_1^K$). Then we have
\[
\bb E[\progress(v^{(t+1)L}) - \progress(v^{tL})|v^{tL} = v] \geq 0.001.
\]
\end{lemma}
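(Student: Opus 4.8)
The plan is to expand the one-phase increment as
\[
\progress(v^{(t+1)L}) - \progress(v^{tL}) \;=\; \Delta\settled + \Delta\otllr ,
\]
where only the $k$ samples $j_1,\dots,j_k$ worked on during the phase contribute. Since the phase halts the instant one of these samples settles and $T_\bootstrap$ never re-queries a settled sample, at most one of them settles, so $\Delta\settled\in\{0,1\}$ and $\E[\Delta\settled\mid v^{tL}=v]=p_s$, where $p_s$ is the probability (conditioned on $v^{tL}=v$) that some $j_i$ settles during the phase. All expectations below are conditioned on $v^{tL}=v$. I would first extract two facts from \autoref{lm:submartingale}, which applies step by step because $T_\bootstrap$ never queries a settled sample: (i) $\E[\Delta\otllr]\ge 0$; and (ii) the one step at which a sample settles (if any) has increment of absolute value in $[400,600]$, so by the $0.001\cdot\E[(\Delta^t)^2]$ lower bound that step alone contributes at least $0.001\cdot 400^2=160$ times the probability it occurs — summing over steps and discarding the nonnegative contributions of the others gives $\E[\Delta\otllr]\ge 160\,p_s$. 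Hence $\E[\Delta\progress\mid v^{tL}=v]\ge 161\,p_s$, which already proves the lemma whenever $p_s$ exceeds a fixed absolute constant; so I may assume $p_s$ is tiny from now on.

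Next I would extract the gain coming from the overall booster $A^{(v_*)}$. Conditioned on $v^{tL}=v$, the $k$ worked-on coordinates are distributed exactly as $(\gD_1|_{v_*})^k$ by independence of the samples, so the guarantee of $A^{(v_*)}$ applies; moreover, for any leaf $\ell'$ the coordinatewise likelihood ratio in the conditioned problem equals $\lr(\ell'_i)/\lr(v_*)$, so the overall likelihood ratio there is $\prod_i\lr(\ell'_i)/\lr(v_*)^k$. With probability at least $0.9$ the phase ends at a leaf with this quantity $\ge 25$, and hence with probability at least $0.9-p_s$ this happens with no sample settling; on that event every $\log\lr(\ell'_i)$ lies in $[-100,100]$, so $\tllr=\log\lr$ on each worked-on sample and
\[
\Delta\otllr \;=\; \sum_{i=1}^{k}\bigl(\log\lr(\ell'_i)-\ell_0\bigr)\;\ge\;\log 25 ,
\]
where $\ell_0:=\log\lr(v_*)$.

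To turn this pointwise gain into a lower bound on $\E[\Delta\otllr]$ I would control the downside by a change of measure: conditioned on $v^{tL}=v$, the phase restricted to the $k$ worked-on coordinates is a deterministic decision tree on $\Sigma^{kn}$ run on input $(\gD_1|_{v_*})^k$, so with $D:=\sum_i(\log\lr(v^{\mathrm{end}}_{j_i})-\ell_0)$ one gets
\[
\E\!\left[e^{-D}\right] \;=\; \E\!\left[\textstyle\prod_i \lr(v_*)/\lr(v^{\mathrm{end}}_{j_i})\right]\;=\;\textstyle\sum_{\text{leaves}}\prod_i(\gD_0|_{v_*})(\mathrm{leaf}_i)\;=\;1 ,
\]
so $\Pr[D\le -t]\le e^{-t}$ and $\E[(-D)^+]\le 1$. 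On no-settle runs $\Delta\otllr=D$, so combining with the previous step, $\E[\Delta\otllr\,\ind_{\text{no settle}}]\ge \log 25\,(0.9-p_s)-1$, a positive constant for small $p_s$ since $0.9\log 25>1$. On settle runs $\Delta\settled=1$ while $\Delta\otllr$ equals $500-\ell_0$ (from the newly settled sample) plus the aggregate log-likelihood drift $D_{\mathrm{others}}$ of the remaining $k-1$ samples; bounding $\E[D_{\mathrm{others}}\ind_{\text{settle}}]$ from below — again by a change-of-measure / exponential-tail estimate, now restricted to the non-settled coordinates at the moment of settling, together with the $160\,p_s$ slack from fact~(ii) — shows the settle runs cost at most a small constant. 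Choosing the threshold between the ``$p_s$ large'' and ``$p_s$ small'' regimes appropriately then yields $\E[\Delta\progress\mid v^{tL}=v]\ge 0.001$.

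The hard part will be the bound on the settle runs. A worked-on sample may settle ``in the wrong direction,'' or its true log-likelihood ratio may overshoot far past $100$ only to be truncated back to $500$ by $\otllr$, and this may coincide with the other $k-1$ samples having each drifted down by as much as $\approx 200$; the naive estimate $\Delta\otllr\ge 600-200k$ is far too lossy once $k$ is large. The point will be that such runs are quantitatively rare: the $0.001\cdot\E[(\Delta^t)^2]$ slack in \autoref{lm:submartingale} ``pays for'' each truncation jump at rate $\approx 160$, and an exponential-tail estimate — a change of measure applied to the non-settled coordinates — shows that a large downward aggregate drift among those coordinates has exponentially small probability, so it costs only $O(1)$ in expectation, dominated by the $\Omega(1)$ booster gain. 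Making the constants line up to reach exactly the stated $0.001$ is the delicate bookkeeping that is the technical core of this lemma.
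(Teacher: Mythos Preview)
Your large-$p_s$ case and the change-of-measure bound $\E[\Delta\otllr\,\ind_{\text{no settle}}]\ge\log 25\,(0.9-p_s)-1$ are both correct. The gap is the settle term. Your proposed ``change of measure applied to the non-settled coordinates at the moment of settling'' does not yield an $O(1)$ bound on $\E[(-D_{\mathrm{others}})^+\ind_{\text{settle}}]$: which coordinate settles is adaptive and may be highly correlated with the drift of the others (the booster can choose to query the potentially-settling coordinate only after the remaining $k-1$ have drifted far down). Concretely, for each fixed $j$ one does get $\E[e^{-D_{-j}}]=1$ by optional stopping, hence $\E[(-D_{-j})^+]\le 1$ and $\E[(-D_{-j})^+\ind_{S_j}]\le 1$; but summing over the $k$ disjoint events $S_j$ gives only $\E[(-D_{\mathrm{others}})^+\ind_{\text{settle}}]\le k$, which is far too weak for large $k$. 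The $160\,p_s$ slack does not rescue this either: it bounds the \emph{total} $\E[\Delta\otllr]$, not its restriction to settle runs, so it cannot be applied additively to the settle piece after you have already split off the no-settle piece.

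The paper avoids this difficulty entirely by a martingale--variance contradiction that never isolates the settle contribution. In the small-$p_s$ case it first records (as you do) that $\Pr[\Delta\otllr\ge 3\mid v^{tL}=v]\ge 0.8$. It then \emph{assumes} $\E[\Delta\otllr]<0.001$ and derives a contradiction: writing $\Delta(v^s)=\E[\Delta^{(s)}\mid v^s]$ and $\Delta_2(v^s)=\Var[\Delta^{(s)}\mid v^s]$, the assumption and Markov give $\Pr[\sum_s\Delta(v^s)\ge 1]\le 0.001$, so with probability $\ge 0.799$ the martingale part $\sum_s(\Delta^{(s)}-\Delta(v^s))=\Delta\otllr-\sum_s\Delta(v^s)$ is at least $2$. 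Orthogonality of martingale increments turns this into $\sum_s\E[\Delta_2(v^s)]\ge 3$, and then the full force of \autoref{lm:submartingale}, namely $\Delta(v^s)\ge 0.001\,\Delta_2(v^s)$, yields $\E[\sum_s\Delta(v^s)]\ge 0.003$, contradicting the assumption. This second-moment route is the key idea you are missing; your plan uses \autoref{lm:submartingale} only to extract the $160\,p_s$ first-moment slack, which is not enough.
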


Before proving the lemma, we first show how it implies \autoref{thm:bootstrapping}.
\begin{proof}[Proof of \autoref{thm:bootstrapping}]
  We consider an extended version of $T_\bootstrap$ that always halts after exactly $C\cdot K$ phases: whenever it would halt at line 1, it instead enters dummy phases and increases its total progress $\progress$ by 0.001 per phase (so that now $\progress = \settled + \otllr + 0.001\cdot\textup{number of dummy phases}$). By \autoref{lm:progress}, the extended algorithm finishes with expected total progress $\bb E[\progress]\geq 0.001C\cdot K$ on input drawn from $\gD_1^K$. However, $\progress$ can never grow too large: before any dummy phase, $\progress$ is at most $501K$, and there are at most $C\cdot K$ dummy phases, so $\progress \leq 501K + 0.001C\cdot K$. By Markov's inequality on the non-negative random variable $(501K + 0.001C\cdot K) - \progress$, we have $\Pr[\progress\leq 501K]\leq \frac{501K}{0.001C\cdot K} = \frac{501}{0.001C}$. If we choose a large enough $C$, we know that with probability at least $0.99$, the total progress exceeds $501K$, which means that the extended algorithm enters dummy phases before halting, and the original algorithm halts at line 1 with all but 0.001 fraction of the samples settled.

It now suffices to show that the fraction of samples settled in the wrong direction is at most $0.01$ with probability at least $0.99$. We first fix $j$ and show that the probability that the $j$-th sample is settled in the wrong direction is at most $e^{-100}$, and then use the linearity of expectation and Markov's inequality to bound the overall wrong settlement.

Conditioning on all but the $j$-th sample, $T_\bootstrap$ becomes a deterministic decision tree $T'$ on a single sample. Let $U$ denote the set of leaves $\ell$ of $T'$ with $\lr(\ell)\leq e^{-100}$. We have $\sum_{\ell\in U}\gD_1(\ell)\leq e^{-100}\sum_{\ell\in U}\gD_0(\ell)\leq e^{-100}$. This means that the probability that a sample from $\gD_1$ reaches leaves in $U$ is at most $e^{-100}$. Thus the probability of wrong settlement for sample $j$ in $T_\bootstrap$ is at most $e^{-100}$.

By the linearity of expectation, the expected fraction of samples settled in the wrong direction is at most $e^{-100}$. Then by Markov's inequality, with probability at least $0.99$, the fraction of wrong settlement is at most $0.01$.
\end{proof}

\begin{proof}[Proof of \autoref{lm:progress}]
$\settled(v^{(t+1)L}) - \settled(v^{tL})$ is either $0$ or $1$, depending on whether or not a sample becomes settled in phase $t+1$. 

In the case where $\Pr[\settled(v^{(t+1)L}) - \settled(v^{tL}) = 1|v^{tL} = v]\geq 0.001$, we have $\bb E[\settled(v^{(t+1)L}) - \settled(v^{tL})|v^{tL} = v]\geq 0.001$, and by \autoref{lm:submartingale} we have $\bb E[\otllr(v^{(t+1)L}) - \otllr(v^{tL})|v^{tL} = v]\geq 0$. Summing these two inequalities up proves the lemma.

From now on, we consider the harder case where $\Pr[\settled(v^{(t+1)L}) - \settled(v^{tL}) = 1|v^{tL} = v]< 0.001$. We first prove that
\begin{equation}
\label{eq:progress}
\Pr[\otllr(v^{(t+1)L}) - \otllr(v^{tL})\geq 3|v^{tL} = v]\geq 0.8.
\end{equation}
Recall that in this phase $T_\bootstrap$ runs the $(0.1,25)$-overall likelihood booster $A^{(v_*)}$ for $(\gD_0|_{v_*})^k$ and $(\gD_1|_{v_*})^k$ on the samples $j_1, \dots, j_k$. If $\settled(v^{(t+1)L}) - \settled(v^{tL}) = 0$, i.e. no sample becomes settled in this phase, then
\begin{equation*}
    \otllr(v^{(t+1)L}) - \otllr(v^{tL}) = \sum_{s = 1}^k\left(\log\frac{\gD_1(v^{(t+1)L}_{j_s})}{\gD_0(v^{(t+1)L}_{j_s})} - \log\frac{\gD_1(v^{tL}_{j_s})}{\gD_0(v^{tL}_{j_s})}\right).
\end{equation*}
Conditioning on $v^{tL} = v$, we have $v^{tL}_{j_s}= v_*$, since $v_{j_1} = \cdots = v_{j_k} = v_*$. From $\frac{\gD_b(v^{(t+1)L}_{j_s})}{\gD_b(v_*)} = \gD_b|_{v_*}(v^{(t+1)L}_{j_s})$, we see that 
\begin{equation*}
    \otllr(v^{(t+1)L}) - \otllr(v^{tL}) = \log\prod_{s = 1}^k\frac{\gD_1|_{v_*}(v^{(t+1)L}_{j_s})}{\gD_0|_{v_*}(v^{(t+1)L}_{j_s})}.
\end{equation*}
Therefore, in order to prove (\ref{eq:progress}) by a union bound, we only need to prove that the extended version of phase $t+1$ where $A^{(v_*)}$ gets to run without early halting achieves $\prod_{s = 1}^k\frac{\gD_1|_{v_*}(v^{(t+1)L}_{j_s})}{\gD_0|_{v_*}(v^{(t+1)L}_{j_s})}\geq e^3$ with probability at least 0.9. This is indeed true because $A^{(v_*)}$ is a ($0.1,25$)-overall likelihood booster for $(\gD_0|_{v_*})^k$ and $(\gD_1|_{v_*})^k$.

We now prove $\bb E[\otllr(v^{(t+1)L}) - \otllr(v^{tL})|v^{tL} = v]\geq 0.001$. We prove it by contradiction. Suppose $\bb E[\otllr(v^{(t+1)L}) - \otllr(v^{tL})|v^{tL} = v]<0.001$. 
For $tL\leq s < (t+1)L$, define $\Delta(v^s)$ as the conditional expectation $\bb E[\otllr(v^{s+1}) - \otllr(v^s)|v^s]$ and $\Delta_2(v^s)$ as the conditional variance $\bb E[(\otllr(v^{s+1}) - \otllr(v^s) - \Delta(v^s))^2|v^s]$. Note that 
\begin{equation*}
\Delta_2(v^s) = \bb E[((\otllr(v^{s+1}) - \otllr(v^s))^2|v^s] - (\Delta(v^s))^2\leq \bb E[((\otllr(v^{s+1}) - \otllr(v^s))^2|v^s].
\end{equation*}
Thus by \autoref{lm:submartingale}, we know that $\Delta(v^s)\geq 0.001 \cdot \Delta_2(v^s)\geq 0$.
Now we have
\begin{align*}
0.001> & \bb E[\otllr(v^{(t+1)L}) - \otllr(v^{tL})|v^{tL} = v]\\
= & \sum_{tL\leq s < (t+1)L}\bb E[\Delta(v^s)|v^{tL} = v].
\end{align*}

By Markov's inequality, we have $\Pr\left[\sum_{tL\leq s < (t+1)L}\Delta(v^s)\geq 1|v^{tL} = v\right]\leq 0.001$. Now by a union bound with (\ref{eq:progress}), we have
\begin{align}
  & \bb E\left[\left.\left(\sum_{tL\leq s < (t+1)L}(\otllr(v^{s+1}) - \otllr(v^s) - \Delta(v^s))\right)^2\right|v^{tL} = v\right]\nonumber \\
  = &
  \bb E\left[\left.\left((\otllr(v^{(t+1)L}) - \otllr(v^{tL})) - \sum_{tL\leq s < (t+1)L}\Delta(v^s)\right)^2\right|v^{tL} = v\right]\nonumber \\
  \geq & (0.8 - 0.001) \times (3 - 1)^2\nonumber \\
  > & 3.\label{eq:progress2}
\end{align}
Since $\bb E\left[\otllr(v^{s+1}) - \otllr(v^{s}) - \Delta(v^{s})|v^s\right] = 0$, we have
\begin{equation*}
\bb E\left[(\otllr(v^{s_1+1}) - \otllr(v^{s_1}) - \Delta(v^{s_1}))\cdot (\otllr(v^{s_2+1}) - \otllr(v^{s_2}) - \Delta(v^{s_2}))|v^{tL} = v\right] = 0
\end{equation*}
whenever $s_1< s_2$ by further conditioning on $v^{s_2}$. Thus expanding (\ref{eq:progress2}) we have
\begin{align*}
  & \sum_{tL\leq s < (t+1)L}\bb E[\Delta_2(v^s)|v^{tL} = v]\\
  = &\bb E\left[\left.\sum_{tL\leq s < (t+1)L}(\otllr(v^{s+1}) - \otllr(v^s) - \Delta(v^s))^2\right|v^{tL} = v\right]\geq 3.
\end{align*}
Since $\Delta(v^s)\geq 0.001\cdot\Delta_2(v^s)$, we have
\begin{equation*}
  0.001 > \sum_{tL\leq s < (t+1)L}\bb E[\Delta(v^s)|v^{tL} =  v] \geq 0.001\cdot \sum_{tL\leq s < (t+1)L}\bb E[\Delta_2(v^s)|v^{tL} = v] \geq 0.001\times 3,
\end{equation*}
a contradiction.

Now we have shown $\bb E[\otllr(v^{(t+1)L}) - \otllr(v^{tL})|v^{tL} = v]\geq 0.001$. Adding it to the trivial inequality $\bb E[\settled(v^{(t+1)L}) - \settled(v^{tL})|v^{tL} = v]\geq 0$ proves the lemma.
\end{proof}

\subsection{A helper inequality} \label{sec:helper}

\begin{lemma}
\label{lm:helper}
For all $M\geq 0, t\in (0,e^{M}]$, we have $$t\ln t - (t - 1)\geq \frac 1{M + 2}\cdot t\ln^2 t.$$
\end{lemma}
\begin{proof}
  Define function $h(t) = t\ln t - (t - 1) - \frac 1{M + 2}\cdot t\ln^2 t$ on the interval $t\in (0,e^{M}]$. Our goal is to show $h(t)\geq 0$. Note that $h(1) = 0$, so we only need to show $h'(t)\geq 0$ for $t\geq 1$ and $h'(t)\leq 0$ for $t\leq 1$. We prove this by calculating $h'(t)$:
  \begin{equation*}
  h'(t) = \ln t - \frac 1{M + 2} \cdot \ln^2 t - \frac 2{M + 2}\cdot \ln t = \left(1 - \frac{(\ln t) + 2}{M + 2}\right)\ln t.
  \end{equation*}
  Note that $1-\frac{(\ln t) + 2}{M + 2}\geq 0$ because $\ln t\leq M$. Therefore $h'(t)\geq 0$ when $t\geq 1$ and $h'(t)\leq 0$ when $t\leq 1$, as desired.
\end{proof}

\section{Application 1: Selection Problem} \label{sec:selection}

\subsection{Bi-correlated samples} \label{sec:bicorr}

To establish a relationship between correlated samples and selection, we first define an intermediate problem. The \emph{bi-correlated samples problem} is defined by (here $\gD_{ab}\coloneqq\gD_a\times\gD_b$):
\begin{align*}
\bicorr_\epsilon(f,\gD)~
&\textstyle\coloneqq~ \min_{k\geq 1}\,
\textstyle\dt_\epsilon(f^{2k}, \frac{1}{2}\gD_{01}^k + \frac{1}{2}\gD_{10}^k),\\
\bicorr_\epsilon(f)~
&\textstyle\coloneqq~ \max_{\gD}\,
\textstyle\bicorr_\epsilon(f,\gD).
\end{align*}
That is, the task is to decide whether $f^{2k}$ outputs $(01)^k$ or $(10)^k$ as $k\to\infty$. We show this is as hard as correlated samples:
\begin{lemma} \label{lem:bicorr}
$\corr(f,\gD) = \Theta(\bicorr(f,\gD))$.
\end{lemma}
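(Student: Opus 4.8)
The plan is to prove the two inequalities separately; one is immediate and the other carries all the content.

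For $\bicorr(f,\gD)\le\corr(f,\gD)$: from a depth-$d$ deterministic decision tree that decides (up to the allowed error) whether $k$ given samples came from $\gD_0^k$ or from $\gD_1^k$ — which is essentially what $\corr(f,\gD)$ measures — build a bi-correlated solver as follows. On a $2k$-sample instance $x^1,\dots,x^{2k}$ drawn from $\tfrac12\gD_{01}^k+\tfrac12\gD_{10}^k$, run the tree on the $k$ odd-indexed samples $x^1,x^3,\dots,x^{2k-1}$; these are $\sim\gD_0^k$ in the $\gD_{01}^k$ branch and $\sim\gD_1^k$ in the $\gD_{10}^k$ branch, so the tree identifies the branch with the same depth and error, up to the standard constant-factor translation between computing $f^{2k}$ and deciding the hidden bit.

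The reverse bound $\corr(f,\gD)\le O(\bicorr(f,\gD))$ is the substance, and I would route it through the likelihood-booster language of \autoref{sec:booster} applied to the pair $(\gD_{01},\gD_{10})$. Start from a depth-$d$ deterministic decision tree solving the bi-correlated problem for $(f,\gD)$ on some number $k$ of pairs. Since many pairs form a multi-sample distinguishing problem for $(\gD_{01},\gD_{10})$, the error-reduction argument behind \autoref{fact:amplify} applies verbatim, so at a constant-factor cost in depth we may assume it errs with probability at most $e^{-100}$ on both $\gD_{01}^k$ and $\gD_{10}^k$; then \autoref{cor:corr->overall}, with $(\gD_{01}^k,\gD_{10}^k)$ in the roles of $(\gD_0^k,\gD_1^k)$, exhibits it as a $(\delta,M)$-overall likelihood booster $B$ for $\gD_{01}^k,\gD_{10}^k$ with $\delta$ tiny and $M$ large, still of depth $O(d)$. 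Now build a query algorithm distinguishing $\gD_0^k$ from $\gD_1^k$ by feeding $B$ the pairs $(x^j,z^j)_{j\le k}$, where $x^1,\dots,x^k$ is the genuine input (analyzed under $x\sim\gD_1^k$) and $z^1,\dots,z^k$ are generated internally from $\gD_0$; generating the $z^j$ and answering $B$'s queries to them is free (as in \autoref{lm:uniform->single}), so only $B$'s queries to the real samples count and the algorithm has depth $O(d)$. On a $\gD_1^k$-input the fed pairs are exactly $\sim\gD_{10}^k=(\gD_1\times\gD_0)^k$, so with probability $\ge 1-\delta$ we reach a leaf $\ell$ whose overall likelihood ratio (of $\gD_{10}$ over $\gD_{01}$) is $\ge M$; writing $\gD_{ab}=\gD_a\times\gD_b$ and splitting each pair-box into its two halves, this unwinds to
\[
\prod_{j=1}^{k}\lr(\ell^{\mathrm{real}}_j)\;\ge\;M\cdot\prod_{j=1}^{k}\lr(\ell^{\mathrm{fresh}}_j),
\]
where $\lr(\cdot)=\gD_1(\cdot)/\gD_0(\cdot)$ is the ordinary likelihood ratio and $\ell^{\mathrm{real}}_j,\ell^{\mathrm{fresh}}_j$ are the sets pinned down on $x^j$ and $z^j$. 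If one can conclude from this that $\prod_j\lr(\ell^{\mathrm{real}}_j)\ge\sqrt M$ with probability close to $1$, then $B$ viewed on the real coordinates is an efficient overall likelihood booster for $\gD_0^k,\gD_1^k$, hence (by \autoref{lm:booster->query}, treating $\gD_b^k$ as a single distribution, and then derandomizing the choice of the $z^j$) a depth-$O(d)$ query algorithm distinguishing $\gD_0$ from $\gD_1$ with error below any prescribed constant, giving $\corr(f,\gD)=O(d)=O(\bicorr(f,\gD))$.

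The main obstacle is exactly the implication just flagged. A priori $\prod_j\lr(\ell^{\mathrm{fresh}}_j)$ can be as small as $0$: on a sample drawn from $\gD_0$ the quantity $\lr(\cdot)$ is a $\gD_0$-martingale of mean $1$ but can be driven arbitrarily close to $0$, so a maliciously chosen $B$ could ``spend'' its entire boost on the self-generated coordinates and leave $\prod_j\lr(\ell^{\mathrm{real}}_j)$ near $1$, making the displayed inequality vacuous. I expect this to be handled by exploiting that the fresh coordinates are ours: one can feed $B$ capped/truncated versions of the $z^j$-coordinates, in the spirit of the $\otllr$ truncation in \autoref{sec:bootstrapping}, so that their log-likelihood-ratios never leave a bounded window (bounding $\prod_j\lr(\ell^{\mathrm{fresh}}_j)$ from below) while degrading $B$'s guarantee only slightly; or one can additionally run the construction with the real samples placed on the \emph{right} of each pair — so the fed input is $\sim\gD_{01}^k$ — and use the \emph{soundness} half of \autoref{lm:booster->query} to exclude the ``boost came from the fresh side'' scenario, then combine the two runs. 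Turning one of these into a rigorous ``the boost is attributed to the real samples'' statement is the technical heart of the lemma; the reduction skeleton, the bookkeeping of error and boosting parameters, and the derandomization are routine given the machinery already in place.
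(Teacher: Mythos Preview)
Your easy direction is fine. For the hard direction, you have correctly identified a real gap in your own argument and not closed it: nothing in your setup prevents the booster $B$ from dumping all of its likelihood gain onto the self-generated coordinates, and neither of your two suggested patches is worked out. The truncation idea is especially fragile, since $B$ was built for the genuine pair $(\gD_{01},\gD_{10})$ and need not behave well once you start feeding it capped inputs on the fresh side.

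The paper's proof sidesteps this entirely with a one-line hybrid through $\gD_{00}^k$. Take the optimal bi-correlated tree $T$; since it separates $\gD_{01}^k$ from $\gD_{10}^k$ with bounded error, the statistical distance $d(T(\gD_{01}^k),T(\gD_{10}^k))$ is bounded away from $0$. By the triangle inequality, at least one of $d(T(\gD_{01}^k),T(\gD_{00}^k))$ or $d(T(\gD_{00}^k),T(\gD_{10}^k))$ is also bounded away from $0$. In either case you get a correlated-samples algorithm with advantage bounded away from $0$ by generating the ``constant'' half of each pair from $\gD_0$ yourself and plugging your real $k$ samples into the other half, then amplifying via \autoref{fact:amplify}. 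This is exactly what your second fix was groping toward, but once you see the hybrid you realize the entire likelihood-booster detour through \autoref{sec:booster} is unnecessary here: you never need to attribute the boost to a side, because you only need $T$'s \emph{output bit} to separate one of the two hybrid pairs.
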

\begin{proof}
It is obvious that $\bicorr(f,\gD)\leq \corr(f,\gD)$, so we focus on the converse, $\corr(f,\gD) \leq O(\bicorr(f,\gD))$. The proof is via a hybrid argument. Let $T\colon(\{0,1\}^n)^{2k}\to\{0,1\}$ be an optimal algorithm for $\bicorr_{1/3}(f,\gD)$ that uses $k$ sample pairs. Letting $d(-,-)$ denote the statistical distance between two distributions, the fact that $T$ achieves error $\epsilon\coloneqq 1/3$ can be written as
\[
d(T(\gD_{01}^k),T(\gD_{10}^k))~\geq~ 1-2\epsilon.
\]
By the triangle inequality,
\[
d(T(\gD_{01}^k),T(\gD_{00}^k))\, +\,
d(T(\gD_{00}^k),T(\gD_{10}^k))~\geq~ 1-2\epsilon.
\]
Either the first or the second term is $\geq (1-2\epsilon)/2$. Say the first (second case is similar):
\[
d(T(\gD_{01}^k),T(\gD_{00}^k))~\geq~ (1-2\epsilon)/2~=~ 1 - 2\epsilon'
\qquad\text{where}\enspace
\epsilon' \coloneqq  1/4+\epsilon/2 = 5/12.
\]
This means we can turn $T$ into an $5/12$-error algorithm for the correlated $k$-samples problem: the odd numbered input samples of $T$ the algorithm can generate from $\gD_0$ on its own; the even numbered input samples of $T$ are taken from the input to the correlated $k$-samples problem. Finally, the error can be reduced to $1/3$ via \autoref{fact:amplify}.
\end{proof}

\subsection{Proof of \autoref{thm:selection}}
\paragraph{First item.}
The following claim together with \autoref{lem:bicorr} implies the first item.
\begin{claim}
$\bicorr_\epsilon(f,\gD)\leq\sel_\epsilon(f,\gD)$.
\end{claim}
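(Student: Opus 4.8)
The plan is to take an optimal algorithm $S$ for the $k$-selection problem on $(f,\gD)$ and use it as a black box to build an algorithm for the bi-correlated samples problem that uses the same number of queries and the same error. Recall that in $k$-selection, the input is $(x^1,\dots,x^k)\sim\gD^k$ and the algorithm outputs a pair $(i,f(x^i))$ with $i\in[k]$, correct with probability at least $1-\epsilon$. In the bi-correlated problem with parameter $k$, the input is $2k$ samples that are either $\gD_{01}^k$ (i.e.\ the pairs are $(\text{sample from }\gD_0,\text{sample from }\gD_1)$, repeated $k$ times) or $\gD_{10}^k$, and we must decide which.

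First I would observe that under either $\gD_{01}^k$ or $\gD_{10}^k$, each of the $2k$ individual samples is marginally distributed as $\gD$ (since $\gD=\tfrac12\gD_0+\tfrac12\gD_1$): in a pair $(\gD_0,\gD_1)$ or $(\gD_1,\gD_0)$, the first coordinate alone is $\tfrac12\gD_0+\tfrac12\gD_1=\gD$ only if we were drawing the pair uniformly — wait, that is not quite right. Let me instead feed $S$ just the $k$ \emph{odd-indexed} samples $x^1,x^3,\dots,x^{2k-1}$ (the ``first coordinate'' of each pair). Under $\gD_{01}^k$ every odd sample is drawn from $\gD_0$, so it is a valid $0$-input; under $\gD_{10}^k$ every odd sample is drawn from $\gD_1$, so it is a valid $1$-input. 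These are not distributed as $\gD^k$, so I cannot directly invoke the correctness guarantee of $S$. The fix is to randomize: before running $S$, flip a fair coin for each pair independently; if heads, pass $S$ the first coordinate of that pair, if tails, pass $S$ the second coordinate, and remember the choices. Now, \emph{conditioned on the bi-correlated input being $\gD_{01}^k$}, each sample handed to $S$ is $\gD_0$ with probability $1/2$ and $\gD_1$ with probability $1/2$, independently — i.e.\ exactly $\gD^k$; and symmetrically for $\gD_{10}^k$. So $S$'s correctness hypothesis applies: $S$ outputs some $(i,b)$ with $b=f(\text{the $i$-th sample fed to }S)$ with probability $\geq 1-\epsilon$.

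The final step is to read off the answer to the bi-correlated problem from $S$'s output. We know for which pair index $j$ the $i$-th sample came from, and whether it was the first or second coordinate of that pair. If $S$ was correct, $b$ equals $f$ of that coordinate: in the $\gD_{01}^k$ case the first coordinate has $f$-value $0$ and the second has $f$-value $1$; in the $\gD_{10}^k$ case it is reversed. So ``$b=0$ and $S$ read a first coordinate'' or ``$b=1$ and $S$ read a second coordinate'' both indicate $\gD_{01}^k$; the other two combinations indicate $\gD_{10}^k$. Output accordingly. Whenever $S$ is correct (probability $\geq 1-\epsilon$) this classification is correct, so the resulting algorithm has error $\leq\epsilon$ and query cost $k\text{-}\sel_\epsilon(f,\gD)$; taking the $k$ that is optimal for selection and then minimizing over $k$ on the bi-correlated side gives $\bicorr_\epsilon(f,\gD)\leq\sel_\epsilon(f,\gD)$. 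I do not expect a serious obstacle here — the only subtlety is the randomization needed to make the marginal seen by $S$ equal to $\gD^k$, and a deterministic decision tree achieving the same can be obtained by fixing the best coin sequence (as in the derandomization step of \autoref{lm:uniform->single}), or one simply notes that $\bicorr$ and $\sel$ as defined tolerate this since an $\epsilon$-error randomized algorithm of cost $q$ yields an $\epsilon$-error deterministic one of cost $q$ by averaging.
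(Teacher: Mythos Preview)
Your proposal is correct and essentially identical to the paper's argument: the paper also takes the optimal $k$-selection algorithm, feeds it one coordinate from each of the $k$ pairs chosen by independent fair coin flips (so that the input to the selection algorithm is $\gD^k$ in either case), and then combines the selection output $(i,b)$ with the remembered coin flip at index $i$ to decide between $\gD_{01}^k$ and $\gD_{10}^k$. Your remark about derandomization is also fine and is left implicit in the paper.
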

\begin{proof}
Let $T_\sel$ be an optimal algorithm for $\sel_\epsilon(f,\gD)$ using $k$ samples. We describe an algorithm $T_\bicorr$ for bi-correlated $k$-samples with the same error and query cost. Let $x=(x_{ij})$ for $(i,j)\in[k]\times[2]$ be the random input to $T_\bicorr$, that is, either \emph{(i)} $x\sim \gD_{01}^k$ or \emph{(ii)} $x\sim \gD_{10}^k$. The algorithm $T_\bicorr$ chooses a random string $z\in[2]^k$ and runs $T_\sel$ on input $y\coloneqq (x_{iz_i})_{i\in[k]}$. Note that $y$ is distributed as $\gD^k$ in both cases \emph{(i)} and \emph{(ii)}. Suppose $T_\sel$ outputs some $(i,f(x_{iz_i}))$. Assuming this output is correct for selection, and remembering our choice of $z_i$, we can deduce which case, \emph{(i)} or \emph{(ii)}, the input $x$ came from, and let $T_\bicorr$ guess accordingly. Hence algorithm $T_\bicorr$ is correct every time $T_\sel$ is, and so the error parameter is unaffected.
\end{proof}

\paragraph{Second and third item.}
For separating correlated samples from selection, we again consider the $n$-bit $\xor_n$ function. Define $x\sim \gD$ by the following process:
\begin{enumerate} \setlength{\itemsep}{-0.2em}
	\item Sample $z$ uniformly from $\{0,1\}^{n-2}$ and let $a\coloneqq\xor_{n-2}(z)$.
	\item Sample $b$ uniformly from $\{0,1\}$.
	\item With probability $\epsilon\coloneqq 1\%$, output $x\coloneqq aaz$; with probability $1-\epsilon=99\%$, output $x\coloneqq bbz$.
\end{enumerate}

Note that the first two bits of $x\sim\gD$ are identical and hence $\xor_n(x)=\xor_{n-2}(z)$. Moreover, the first bit is $\epsilon$-correlated with the function value $\xor_n(x)$. This makes $(\xor_n,\gD)$ easy for the correlated samples problem: The $1$-query algorithm that guesses the function value based on the first bit of the first sample has error $\leq 1/2-\epsilon/2$, and this error can be reduced to $1/3$ via \autoref{fact:amplify}. This shows that $\corr(\xor_n,\gD)=O(1)$.

Next we prove the lower bound $\sel(\xor_n,\gD)=\Omega(n)$, which also proves the third item. Suppose for contradiction that $T$ is a height-$(n-3)$ deterministic decision tree for $k$-selection for $(\xor_n,\gD)$. Consider any leaf $\ell$ that claims the $i$-th sample evaluates to $b\in\{0,1\}$. If we condition $\gD^k$ by the $\leq n-3$ queries made by $\ell$, we note that the function value is still only slightly biased away from $1/2$, that is, $\E_{x\sim \gD^k|\ell}[\xor_n(x_i)] \in 1/2\pm \epsilon$. Hence no leaf of $T$ can compute selection to within error $\leq 1/3$. This concludes the proof of \autoref{thm:selection}.

\section{Application 2: Randomized Composition} \label{sec:composition}

\paragraph{Goal.}
In this section we prove \autoref{thm:composition}, namely $\rdt(f\circ g)\geq\Omega(\fbs(f)\rdt(g))$. By \autoref{thm:main} and \autoref{lem:bicorr} (from \autoref{sec:bicorr}) it suffices to show
\[
\bicorr(g)~\leq~O(\rdt(f\circ g)/\fbs(f)).
\]
Let $T$ be an optimal $1/10$-error algorithm for $f\circ g$ making $q\coloneqq O(\rdt(f\circ g))$ queries. Our goal is, given any balanced input distribution $\gD\coloneqq \frac{1}{2}\gD_0 + \frac{1}{2}\gD_1$ to the inner function $g$, to build a bounded-error algorithm $T'$ solving the bi-correlated samples problem for $(g,\gD)$.

\paragraph{Rarely queried block.}
By the definition of $\fbs(f)$, there is an input $y\in\{0,1\}^n$ to $f$ (say, $f(y) = 0$) with sensitive blocks $B_1,\cdots,B_N\subseteq [n]$ and weights $w_1,\cdots,w_N\in[0,1]$ such that
\begin{align}
    \textstyle\sum_{j\in[N]} w_j~&=~ \fbs(f),\label{eq:41}\\
    \textstyle\sum_{j:B_j\ni i}w_j~&\leq~ 1,\qquad\forall i\in[n].\label{eq:42}
\end{align}
For any $z\in\{0,1\}^n$, define $\gD_z$ as the distribution over $(x_1,\cdots,x_n)\in (\{0,1\}^m)^n$ where each $x_i$ is drawn independently from $\gD_{z_i}$. Hence we have $g^n(x)=z$ for $x\sim\gD_z$. We define
\[
q_j~\coloneqq~\text{expected \# of queries $T$ makes to block $B_j$ on input $\gD_y$}.
\]
That is, if we denote by $i_t\in[n]$ the block that $T$ queries at time $t$, then $q_j$ is the expected number of time steps $t$ with $i_t\in B_j$. By linearity of expectation and (\ref{eq:42}), we have
\begin{equation*}\textstyle
\sum_{j\in[N]} w_jq_j
~=~ {\bb E}\Big[\sum_{j\in[N]} w_j \sum_{t:i_t\in B_j}1\Big]
~=~ {\bb E}\Big[\sum_{t \in[q]} \sum_{j:B_j\ni i_t}w_j\Big]
~\leq~ {\bb E}\Big[ \sum_{t \in[q]}1\Big]
~\leq~q.
\end{equation*}
Combining this with (\ref{eq:41}), we know there exists $j\in[N]$, say $j=1$ for simplicity, such that
\[
q_1~\leq~\frac{q}{\fbs(f)}.
\]

\paragraph{Truncated $T$.}
Next we modify $T$ so that it makes at most $5q_1$ queries to block $B_1$ for \emph{every} input (not just on average over $\gD_y$). Namely, if $T$ makes more than $5q_1$ queries to block $B_1$, we simply let $T$ halt and output $1$; otherwise its behavior is unchanged. We denote this ``truncated'' algorithm by $T^\tr$. We claim that~$T^\tr$ still computes $f\circ g$ correctly on average over both $\gD_{y}$ and $\gD_{y^{B_1}}$ (recall that $y^{B_1}$ is $y$ but with the block $B_1$ flipped; note that $f(y^{B_1})=1$ and hence $(f\circ g)(x)=1$ for each $x \sim \gD_{y^{B_1}}$)
\begin{align}
\textsl{Correct for $x\sim \gD_{y^{B_1}}$:}\qquad
\Pr[T^\tr(x)=1]
&~\geq~  \Pr[T(x)=1]\nonumber \\
&~\geq~ 4/5 \label{eq:54}. \\[3mm]
\textsl{Correct for $x\sim \gD_y$:}\qquad 
\Pr[T^\tr(x)=0]
&~\geq~ \Pr[T(x)=0] - \Pr[T(x)\textup{ makes}>5q_1\textup{ queries to }B_1]\label{eq:ub} \\
&~\geq~ 4/5 - 1/5\label{eq:mb}\\
&~=~ 3/5,\label{eq:53}
\end{align}
where (\ref{eq:ub}) uses the Union Bound and (\ref{eq:mb}) uses the Markov Bound.

\paragraph{Algorithm $T'$.}
We are ready to define the algorithm $T'$ for the bi-correlated samples problem for $(g,\gD)$. The random input to this problem is $z=(z_{ij})$, $(i,j)\in[n]\times\{0,1\}$, sampled either from \emph{(i)} $\gD_{01}^n$ or \emph{(ii)} $\gD_{10}^n$. On input $z$ the algorithm $T'$ simply runs $T^\tr$ on the input $(x_1,\ldots,x_n)\in(\{0,1\}^m)^n$ defined by
\[
x_i~\coloneqq~
\begin{cases}
z_{iy_i} & \text{for $i\in B_1$},\\
\sim\gD_{y_i} & \text{for $i\notin B_1$}.
\end{cases}
\]
That is, for $i\in B_1$ the algorithm $T'$ simply copies its input bits in $z$ to the bits of $x$. For $i\notin B_1$ the algorithm~$T'$ uses its own randomness to generate an independent sample from either $\gD_0$ or $\gD_1$. The key observation is that in case \emph{(i)} we have $x\sim \gD_{y}$, and in case \emph{(ii)} we have $x\sim\gD_{y^{B_1}}$. But $T^\tr$ can distinguish these two cases to within bounded error by (\ref{eq:54}) and (\ref{eq:53}). Hence $T'$ is a bounded-error algorithm for bi-correlated samples with query cost $5q_1\leq O(q/\fbs(f))$. This completes the proof of \autoref{thm:composition}.
\subsection*{Acknowledgments}

We thank Shalev Ben-David for correspondence about their ongoing work~\cite{Ben-David2020}. AD thanks Mark Braverman for interesting discussions of related topics. LH is supported in part by NSF Award IIS-1908774. WM is supported by a Stanford Graduate Fellowship.  LYT is supported by NSF grant CCF-1921795 and CAREER Award CCF-1942123.

\addcontentsline{toc}{section}{References}
\DeclareUrlCommand{\Doi}{\urlstyle{sf}}
\renewcommand{\path}[1]{\small\Doi{#1}}
\renewcommand{\url}[1]{\href{#1}{\small\Doi{#1}}}
\bibliographystyle{alphaurl}
\bibliography{ref}

\end{document}